\documentclass[11pt]{article}

\usepackage[a4paper,margin=1in]{geometry}
\usepackage[T1]{fontenc}
\usepackage[utf8]{inputenc}
\usepackage{lmodern}
\usepackage{microtype}
\usepackage{amsmath,amssymb,amsthm,mathtools}
\usepackage{bm}
\usepackage{enumitem}
\usepackage{booktabs}
\usepackage{array}
\usepackage{algorithm}
\usepackage{algpseudocode}
\usepackage{float}
\usepackage{xcolor}
\usepackage{hyperref}

\hypersetup{
  colorlinks=true,
  linkcolor=blue!50!black,
  citecolor=blue!50!black,
  urlcolor=blue!50!black
}

\allowdisplaybreaks
\numberwithin{equation}{section}

\newtheorem{theorem}{Theorem}[section]
\newtheorem{lemma}[theorem]{Lemma}
\newtheorem{proposition}[theorem]{Proposition}
\newtheorem{corollary}[theorem]{Corollary}

\theoremstyle{definition}
\newtheorem{definition}[theorem]{Definition}
\newtheorem{remark}[theorem]{Remark}

\newcommand{\Z}{\mathbb{Z}}
\newcommand{\F}{\mathbb{F}}

\newcommand{\OPT}{\mathrm{OPT}}
\newcommand{\cost}{\mathrm{cost}}
\newcommand{\rankop}{\mathrm{rank}}
\newcommand{\im}{\mathrm{im}}
\newcommand{\spann}{\mathrm{span}}

\newcommand{\cyc}{\mathrm{cyc}}

\newcommand{\Zmod}[1]{\mathbb{Z}_{#1}}

\newcommand{\set}[1]{\left\{#1\right\}}
\newcommand{\abs}[1]{\left|#1\right|}

\DeclareMathOperator{\Ker}{Ker}

\DeclareMathOperator{\row}{row}

\title{\vspace{-1ex}
Coordinatewise Balanced Covering for Linear Gain Graphs,\\
with an Application to Coset-List Min-2-Lin over Powers of Two
\vspace{-0.5ex}}

\author{%
\begin{tabular}{c@{\hspace{3.5em}}c}
\textbf{Faruk Alpay} & \textbf{Levent Sario\u{g}lu} \\
\texttt{faruk.alpay@bahcesehir.edu.tr} & \texttt{levent.sarioglu@bahcesehir.edu.tr}
\end{tabular}\\[0.8ex]
\small Department of Computer Engineering\\
\small Bah\c{c}e\c{s}ehir University, Istanbul, Turkey
}

\date{}

\begin{document}
\maketitle

\begin{abstract}
We study a list-constrained extension of modular equation deletion over powers of two, which we call \emph{Coset-List Min-2-Lin$^{\pm}$ over $\Zmod{2^d}$}. Each variable is restricted to a dyadic coset $a+2^{\ell}\Zmod{2^d}$, each binary constraint has one of the forms $x_u=x_v$, $x_u=-x_v$, or $x_u=2x_v$, and the goal is to delete a minimum-cardinality set of constraints so that the remaining system is satisfiable. The problem lies between the no-list case and the still poorly understood fully conservative list setting. Our main technical contribution is a coordinatewise balanced covering theorem for linear gain graphs labeled by vectors in $\F_2^r$: given any balanced subgraph of cost at most $k$, a randomized procedure returns a vertex set $S$ and an edge set $F$ such that $(G-F)[S]$ is balanced and, with probability $2^{-O(k^2r)}$, every hidden balanced subgraph of cost at most $k$ is contained in $S$ while all incident deletions are captured by $F$. The proof tensors the one-coordinate balanced-covering theorem of Dabrowski, Jonsson, Ordyniak, Osipov, and Wahlstr\"om across coordinates, and we complement it with a rank-compression theorem that replaces the ambient lifted dimension by the intrinsic cycle-label rank $\rho$. To make this reduction exact, we develop a cycle-space formulation, a cut-space/potential characterization of balancedness, a minimal-dimension statement for equivalent labelings, and an explicit bit-lifting analysis for dyadic coset systems. These ingredients yield a randomized one-sided-error algorithm with running time
\[
2^{O(k^2\rho+k\log(k\rho+2))}\cdot n^{O(1)}+\widetilde{O}(md+\rho^\omega)
\]
for the decision version parameterized by deletion budget $k$, where $m$ is the number of constraints, $d$ is the modulus depth, and $\omega$ is the matrix-multiplication exponent; the same framework also returns a minimum-weight feasible deletion set among all solutions of size at most $k$.
\end{abstract}

\section{Introduction}

A recurring lesson in modern discrete algorithms is that hidden structure becomes useful only after
one identifies the right \emph{certificate} for it. In the 2026 work of Ghoshal, Huang, Lee,
Makarychev, and Makarychev on structure-sensitive approximability of \textsc{Max-Cut}, global
promises such as $3$-colorability or the presence of a large independent set do not automatically
improve the classical approximation threshold; the key issue is whether the promise yields a
certificate that can actually be exposed algorithmically \cite{ghoshal2026maxcut}. In the 2026
work of Dabrowski, Jonsson, Ordyniak, Osipov, and Wahlstr\"om on modular linear equations, the
certificate is a balanced-subgraph cover that isolates the region where satisfiability survives after
few deletions \cite{dabrowski2026optimal}. In the 2026 work of Terao on approximate linear
matroid intersection, the relevant certificate is low-dimensional span information that can be
accessed by near-input-linear algebraic routines \cite{terao2026faster}. Our paper combines these
three ideas in a new list-constrained setting over powers of two.

\paragraph{The problem.}
Fix $d\ge 1$. In \emph{Coset-List Min-2-Lin$^{\pm}$ over $\Zmod{2^d}$}, each variable $x_v$
ranges over a dyadic coset
\[
L_v=a_v+2^{\ell(v)}\Zmod{2^d}, \qquad 0\le \ell(v)\le d,
\]
and constraints are of the following forms:
\begin{align*}
  x_u &= x_v, \\
  x_u &= -x_v, \\
  x_u &= 2x_v, \\
  x_v &= b \qquad \text{(anchors).}
\end{align*}
The objective is to delete as few constraints as possible so that the remaining subsystem is
satisfiable.

This is the first conservative list regime over $\Zmod{2^d}$ in which the local lists are not
arbitrary but still genuinely nontrivial. Dyadic cosets are canonical over powers of two, and they
are exactly the lists that remain compatible with repeated halving and doubling. The signed relation
$x_u=-x_v$ already contains the odd-cycle obstruction familiar from the $\Z_4$ case, while the
non-permutation relation $x_u=2x_v$ introduces the layered ambiguity that distinguishes prime-power
equation deletion from ordinary signed-graph cleaning.

\paragraph{Why gain graphs appear.}
The lists fix some low-order bits of each variable and leave the remaining bits free. Once one
passes to the unresolved bits, every constraint contributes a parity defect at each active depth.
Those defects organize naturally into a vector label over $\F_2^R$. A connected subsystem is
satisfiable exactly when all lifted cycle labels vanish and all surviving anchors agree. This turns
the instance into a vector-labeled gain graph whose balanced subgraphs are precisely the satisfiable
subinstances. The language of balance goes back to biased and gain graph theory developed by
Zaslavsky, where cycles, matroids, and potentials interact through a rigid combinatorial-algebraic
interface \cite{zaslavsky1990,zaslavsky1991,zaslavsky1995}.

\paragraph{Our contributions.}
We make three technical contributions.
\begin{enumerate}[leftmargin=2.4em,itemsep=2pt,topsep=2pt]
  \item We prove a \emph{coordinatewise balanced covering theorem} for linear gain graphs over
  $\F_2^r$. The theorem is obtained by tensorizing the one-coordinate balanced-covering theorem
  of Dabrowski et al. \cite{dabrowski2026optimal}.
  \item We formalize the relevant intrinsic parameter as the \emph{cycle-label rank} $\rho$, namely
  the rank of the cycle-label map restricted to the graph cycle space. We show that balancedness
  depends only on this image and give a minimal-dimension statement: no equivalent labeling can use
  fewer than $\rho$ coordinates.
  \item We give an explicit lift from dyadic coset systems to vector-labeled gain graphs, establish
  cycle-space, cut-space, and potential formulations of satisfiability, and combine them with
  balanced covering and rank compression to obtain an FPT algorithm parameterized by $k$ and $\rho$.
\end{enumerate}

\subsection{Main statements}

We now state the main theorem in algorithmic form.

\begin{theorem}[Main algorithmic theorem]\label{thm:main}
Let $I$ be an instance of \textsc{Coset-List Min-2-Lin}$^{\pm}$ over $\Zmod{2^d}$ with $m$
constraints and deletion budget $k$. Let $\rho=\rho(I)$ be the cycle-label rank of the lifted
ambiguity graph of $I$. There is a randomized one-sided-error algorithm that decides whether
$\OPT(I)\le k$ in time
\[
2^{O(k^2\rho+k\log(k\rho+2))}\cdot n^{O(1)} + \widetilde{O}(md+\rho^\omega).
\]
If the answer is \textup{YES}, the algorithm outputs an optimal deletion set. More generally, if
each constraint carries a positive weight and the parameter is still the deletion-cardinality budget
$k$, the same algorithm returns a minimum-weight feasible deletion set among all solutions of size
at most $k$.
\end{theorem}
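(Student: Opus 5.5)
The proof runs as a pipeline of three reductions followed by one application of balanced covering.

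\emph{Step 1: lift to a gain graph.} Using the bit-lifting analysis for dyadic coset systems, I will convert $I$ into a vector-labeled gain graph $G$ over $\F_2^R$, with $R\le md$. Vertices are variables together with a distinguished anchor vertex. Each constraint becomes one edge, whose label records its parity defects at every unresolved depth. I will then invoke the cycle-space/potential characterization: a subset of constraints is satisfiable exactly when the corresponding edge set induces a balanced subgraph (all cycle labels vanish). Equivalently, there is a potential $\phi:V\to\F_2^R$ with $\phi(u)+\phi(v)=\lambda(e)$ on every kept edge. So $\OPT(I)\le k$ iff some edge set $X$ with $|X|\le k$ makes $G-X$ balanced. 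The lift is built in time $\widetilde O(md)$.

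\emph{Step 2: compress the labels.} I will compute the image of the cycle-label map. Take a spanning forest, form the fundamental cycle labels, and run Gaussian elimination over $\F_2$. I will then apply the rank-compression theorem: composing the labels with a linear map onto $\F_2^\rho$ that is injective on this image preserves balancedness of every subgraph. Concretely, I normalize labels to vanish on forest edges and project. The resulting equivalent labeling uses $\rho$ coordinates. The linear algebra costs $\widetilde O(md+\rho^\omega)$ once the sparse reduction is organized as a rank computation on a $\rho$-dimensional core. This accounts for the additive term.

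\emph{Step 3: reduce to balanced covering.} With $r=\rho$, I run the coordinatewise balanced covering theorem $2^{O(k^2\rho)}\log n$ times. Each run returns a pair $(S,F)$. Fix a hidden optimal solution $X$, and let the hidden balanced subgraph be the component structure of $G-X$ that I need to cover. This is the step I expect to be the main obstacle. The covering theorem covers balanced subgraphs of cost at most $k$, but $G-X$ may be disconnected and huge. I would follow Dabrowski et al. and target the component containing the anchor vertex, plus a guess of which deleted edges are incident to it. Alternatively, I would iterate over components, charging each with its share of the budget, and use the $2^{O(k\log(k\rho+2))}$ factor to pay for guessing the partition of $X$ among at most $k$ relevant components and their boundary edges. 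With probability $2^{-O(k^2\rho)}$, the hidden subgraph lies in $S$ and its incident deletions are in $F$. Given a successful $(S,F)$, balancedness of $(G-F)[S]$ fixes the potentials on $S$. The remaining part $G-S$ is then a smaller instance with reduced budget, which I handle by recursion or by branching on the at most $k$ deleted edges. Verifying a candidate solution takes polynomial time via a potential check, so the error is one-sided.

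\emph{Step 4: weights.} Among all candidate deletion sets of size at most $k$ produced across all runs and guesses, I output the one of minimum weight. Each candidate is verified to yield a balanced graph, hence a satisfiable system. Correctness for the weighted version holds because the covering guarantee is stated for \emph{every} hidden balanced subgraph of cost at most $k$, in particular for a minimum-weight one. In the unweighted case the same selection returns an optimal deletion set. Amplifying the success probability to a constant contributes the $2^{O(k^2\rho)}n^{O(1)}$ term.
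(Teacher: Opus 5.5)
Your Steps 1, 2 and 4 track the paper's pipeline, but Step 3 is never carried out, and it is the only place where a deletion set is actually produced. The obstacle that diverts you does not exist in the paper's framework. With the paper's definition of $\cost$, the spanning subgraph $H=G-X$ left by an optimal solution $X$ has $\cost(H)=\abs{X}\le k$, because $\delta_G(V(G))=\emptyset$. Theorem~\ref{thm:coordcover} quantifies over all balanced subgraphs, connected or not, so a single successful run on the compressed graph gives $S=V(G)$ and $\abs{F}\le\rho k$.

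The paper then invokes Lemma~\ref{lem:enumeration} (on the success event, some optimal deletion set lies inside $F$) and tries every subset of $F$ of size at most $k$. There are $\sum_{i\le k}\binom{\rho k}{i}\le(e\rho)^k$ such subsets, and this is exactly where the $2^{O(k\log(k\rho+2))}$ factor comes from, not from guessing how $X$ splits over components. Your proposal has neither the $\rho k$ bound nor the containment of a solution in $F$, and what replaces them does not work as written:
\begin{itemize}
\item Balancedness of $(G-F)[S]$ determines potentials only up to a constant on each component of $(G-F)[S]$. Since nothing forces $F\cap E(H)=\emptyset$, these need not be potentials of $G-X$, so they do not identify the edges to delete.
\item The recursion on $G-S$ comes with no progress measure. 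There is also no accounting showing that the product of the per-level success probabilities and guessing overheads stays within $2^{-O(k^2\rho)}$.
\end{itemize}
If you switch to the paper's route, note that the containment $D^\star\subseteq F$ is the stronger form of success described in the abstract (``all incident deletions are captured by $F$''). It is not literally among guarantees (i)--(iii) of Theorem~\ref{thm:coordcover}, so your covering statement must include it explicitly.

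Separately, there is a defect you share with the paper. The equivalence you invoke in Step 1, ``satisfiable exactly when balanced'', is Proposition~\ref{prop:lift} without its anchor-compatibility clause, and it is false as stated. Take the single constraint $x_u=x_v$ with $L_u=2\Zmod{2^d}$ and $L_v=1+2\Zmod{2^d}$. It is unsatisfiable, yet its lifted graph is one edge with no cycle, hence balanced. List constraints act on acyclic parts of the instance and are invisible to cycle labels. The paper's proof of Proposition~\ref{prop:lift} misses this because its propagation step never checks that propagated values stay in the lists. So your Step 1 cannot be used as a black box, and the covering argument built on it inherits the problem.

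Finally, with your bound $R\le md$, the fundamental-cycle computation in Step 2 costs $\widetilde{O}(mR)$, which can be as large as $\widetilde{O}(m^2d)$ rather than $\widetilde{O}(md)$. The paper indexes coordinates by depth, so $R\le d$.
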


The core structural statement is the following.

\begin{theorem}[Coordinatewise balanced covering]\label{thm:coordcover}
Let $\Gamma=(G,\lambda)$ be a linear gain graph with labels $\lambda:E(G)\to\F_2^r$, and let
$k\ge 0$. There is a randomized algorithm running in time $2^{O(k^2r)}\cdot n^{O(1)}$ that
outputs a pair $(S,F)$ such that:
\begin{enumerate}[label=(\roman*),leftmargin=2.2em,itemsep=2pt,topsep=2pt]
  \item $\delta_G(S)\subseteq F$;
  \item $(G-F)[S]$ is balanced;
  \item for every balanced subgraph $H$ of $G$ with $\cost(H)\le k$, with probability at least
  $2^{-O(k^2r)}$ we have
  \[
    V(H)\subseteq S
    \qquad\text{and}\qquad
    \abs{\set{e\in F : e\text{ has an endpoint in }V(H)}}\le rk.
  \]
\end{enumerate}
\end{theorem}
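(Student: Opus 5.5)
We prove the theorem by running the one-coordinate balanced-covering theorem of Dabrowski et al.\ \cite{dabrowski2026optimal} once per coordinate and intersecting the outputs. For $i\in\{1,\dots,r\}$, let $\Gamma_i=(G,\lambda_i)$ be the group-labelled graph over $\F_2$ obtained by projecting every label onto its $i$-th coordinate. Write $\lambda(C)$ for the sum of the labels of the edges of a cycle $C$, and $\lambda_i(C)$ for its $i$-th coordinate. A subgraph is balanced in $\Gamma$ if and only if it is balanced in every $\Gamma_i$, because $\lambda(C)=0$ holds exactly when $\lambda_i(C)=0$ for all $i$. In particular, a hidden balanced subgraph $H$ of $\Gamma$ with $\cost(H)\le k$ is, for each $i$, a balanced subgraph of $\Gamma_i$ of the same cost, where cost counts edges of $\delta_G(V(H))$ together with the edges of $G[V(H)]$ outside $H$. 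Hence the single-coordinate theorem applies to $H$ in each $\Gamma_i$ separately.

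Applying that theorem to $\Gamma_i$ with budget $k$, in time $2^{O(k^2)}n^{O(1)}$, yields a pair $(S_i,F_i)$. It satisfies $\delta_G(S_i)\subseteq F_i$ and $(G-F_i)[S_i]$ is $\lambda_i$-balanced. Moreover, with probability $p=2^{-O(k^2)}$, we have $V(H)\subseteq S_i$ and at most $k$ edges of $F_i$ are incident to $V(H)$. We run the $r$ invocations with independent randomness and output
\[
S=\bigcap_{i=1}^r S_i,\qquad F=\bigcup_{i=1}^r F_i\;\cup\;\delta_G(S).
\]

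Property (i) holds by construction. For (ii), fix a cycle $C$ in $(G-F)[S]$. For each $i$, $C$ lies in $(G-F_i)[S_i]$, since $S\subseteq S_i$ and $F\supseteq F_i$, so $\lambda_i(C)=0$. Thus $\lambda(C)=0$ and $(G-F)[S]$ is balanced.

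For (iii), independence gives that all $r$ good events occur with probability $p^r=2^{-O(k^2r)}$. On that event $V(H)\subseteq S$. The edges of $F$ incident to $V(H)$ come from two sources. The first is the edges of the sets $F_i$ incident to $V(H)$, which number at most $rk$ in total. The second is edges of $\delta_G(S)$ incident to $V(H)$. Such an edge $e=uv$ has $u\in V(H)\subseteq S$ and $v\notin S$, so $v\notin S_j$ for some $j$. Then $e\in\delta_G(S_j)\subseteq F_j$, since $u\in S_j$, and so $e$ is already counted in the first source. Hence the count stays at most $rk$. The total running time is $r\cdot 2^{O(k^2)}n^{O(1)}\le 2^{O(k^2r)}n^{O(1)}$.

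The step that needs the most care is the interface with the cited theorem. I expect it to be the main obstacle. Our statement requires a single pair $(S,F)$ that works simultaneously for every balanced $H$ of cost at most $k$, each with the stated probability. This needs the one-coordinate theorem in its \emph{oblivious} form, where the output distribution does not depend on $H$. That is how the balanced-covering theorem of \cite{dabrowski2026optimal} is phrased, and independence across coordinates then handles the rest. A second point to verify is that their notion of cost of $H$, and their bound on deleted edges incident to $V(H)$, match the quantities used here. If their incidence bound is $O(k)$ rather than $k$, the final count becomes $O(rk)$. In that case one either rescales constants or reads (iii) as an asymptotic bound.
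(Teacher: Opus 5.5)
Your proof is correct and follows the paper's argument: project to each coordinate, run the one-coordinate covering independently, take $S=\bigcap_i S_i$ and $F=\bigcup_i F_i$, verify (i)--(iii), and multiply the success probabilities. The only difference is the extra ${}\cup\,\delta_G(S)$ in your $F$, which is redundant. Your own argument from (iii), applied to every edge of $\delta_G(S)$ and not just those touching $V(H)$, is exactly how the paper shows $\delta_G(S)\subseteq\bigcup_i F_i$ for (i).
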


The dependence on $r$ can then be replaced by the intrinsic cycle-label rank.

\begin{theorem}[Rank compression]\label{thm:rankcompression}
Let $\Gamma=(G,\lambda)$ be a linear gain graph with labels $\lambda:E(G)\to \F_2^R$, and let
$\rho$ be the rank of its cycle-label space. There is a randomized algorithm running in time
$\widetilde{O}(mR+\rho^\omega)$ that computes a new labeling
$\widehat\lambda:E(G)\to\F_2^\rho$ such that a subgraph of $G$ is balanced under $\lambda$ if and
only if it is balanced under $\widehat\lambda$.
\end{theorem}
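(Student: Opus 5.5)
The plan is to identify balancedness with the vanishing of a linear map on the cycle space, and then to compress the label space to the image of that map. Fix a spanning forest $T$ of $G$ (one tree per component). For each non-tree edge $e$, let $C_e$ be its fundamental cycle and set $\phi(e)=\sum_{f\in C_e}\lambda(f)\in\F_2^R$. The cycle-label space is $W=\spann\{\phi(e)\}$, and $\rho=\dim W$. First I will check that $\lambda$ restricted to the cycle space $Z(G)\subseteq\F_2^{E}$, i.e.\ the map $\Lambda:Z(G)\to\F_2^R$ given by $z\mapsto\sum_{e\in z}\lambda(e)$, has image exactly $W$. A subgraph $H$ is balanced iff $\Lambda$ vanishes on $Z(H)\subseteq Z(G)$, because the cycles of $H$ span $Z(H)$ and $\Lambda$ is linear. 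Then choose any linear map $P:\F_2^R\to\F_2^\rho$ that is injective on $W$; for example, take $P$ to be the coordinate projection onto $\rho$ coordinates in which $W$ has full rank. Define $\widehat\lambda=P\circ\lambda$. Then $\widehat\Lambda=P\circ\Lambda$ on $Z(G)$. Since $\Lambda(Z(H))\subseteq W$ and $P$ is injective on $W$, we get $\widehat\Lambda|_{Z(H)}=0$ iff $\Lambda|_{Z(H)}=0$. This gives the equivalence for every subgraph $H$. Note that $\widehat\lambda$ is defined on all edges, including tree edges; this is harmless because only cycle sums matter.

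For the running time, computing $T$ by BFS costs $O(m)$. Computing all $\phi(e)$ naively costs $O(mnR)$, which is too slow, so I use potentials instead. Root each tree and set $p(v)=\sum_{f\in\text{path}(root,v)}\lambda(f)$, computed in $O(nR)$ word operations. Then $\phi(uv)=p(u)+p(v)+\lambda(uv)$ for each non-tree edge $uv$, which costs $O(mR)$ in total. The $\F_2$ characteristic removes orientation issues.

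The main obstacle is finding $\rho$ independent coordinates, or more generally an injective-on-$W$ projection, within $\widetilde O(mR+\rho^\omega)$ rather than via Gaussian elimination on an $m\times R$ matrix. Here I use randomization. Sketch the rows by a random matrix $A\in\F_2^{s\times m}$ with $s=O(\rho)$ rows (or $\rho+O(\log n)$), using a sparse or structured sketch so that computing $A\Phi$ costs $\widetilde O(mR)$. Here $\Phi$ is the $m\times R$ matrix of the vectors $\phi(e)$, and with high probability $\row(A\Phi)=W$. Since $\rho$ is unknown, I guess it by doubling and verify by rank stabilization. Apply the same idea on the column side: multiply by a random $R\times s'$ matrix to reduce to an $O(\rho)\times O(\rho)$ matrix, and compute its rank and a basis in $O(\rho^\omega)$ time by fast elimination. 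Alternatively, take $P$ to be a random linear map $\F_2^R\to\F_2^{\rho}$ restricted against a basis of $W$. A uniformly random $\rho\times R$ matrix is injective on a fixed $\rho$-dimensional $W$ with probability $\prod_i(1-2^{-i})\ge 0.28$, which is only constant; to make it exact I would verify injectivity by checking the rank of $P$ applied to a basis of $W$ (a $\rho\times\rho$ rank computation, $O(\rho^\omega)$) and repeat $O(\log n)$ times. Once $P$ is verified, the equivalence argument above is exact, so the only error lies in the running time or failure probability, not in correctness.

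Finally, I note the consistency with the earlier claims. The resulting dimension $\rho$ is minimal, because any equivalent labeling $\mu$ must have $\ker\Lambda_\mu=\ker\Lambda$ on $Z(G)$, hence an image of dimension $\rho$. This is not needed for the theorem itself, but it confirms that the target dimension is tight.
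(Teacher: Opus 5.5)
Your argument is correct and is essentially the paper's proof: compute the fundamental-cycle labels by tree prefix sums in $\widetilde{O}(mR)$, take any linear map that is injective on the cycle-label image, and get the $\widetilde{O}(\rho^\omega)$ part from a Cheung--Kwok--Lau-type randomized rank routine (the paper uses a left inverse of a column basis of $M_\Gamma$, via Proposition~\ref{prop:basis-projection} and Lemma~\ref{lem:projection}). Two side remarks: first, your coordinate projection onto $\rho$ independent rows of $M_\Gamma$ is the right choice for the time bound, because it is evaluated on all labels in $O(m\rho)$, whereas applying a general dense $\rho\times R$ map (an arbitrary left inverse, or your random $P$) is not obviously within $\widetilde{O}(mR+\rho^\omega)$; second, your closing minimality aside is false as stated (and so is Corollary~\ref{cor:minimal-dimension}), because equal families of balanced subgraphs do not force equal kernels on $\mathcal{C}(G)$---on a bowtie (two triangles sharing a vertex), triangle labels $(1,0),(0,1)\in\F_2^2$ and $1,1\in\F_2$ make exactly the same subgraphs balanced, so $q=1<\rho=2$---though the theorem itself never uses this.
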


\subsection{Relation to recent work}

Theorem~\ref{thm:coordcover} is directly inspired by the balanced-covering theorem of
Dabrowski et al. for biased graphs, which gives an $O^*(4^k)$ procedure in the one-coordinate
case and captures every balanced subgraph of cost at most $k$ with probability
$2^{-O(k^2)}$ \cite{dabrowski2026optimal}. Our theorem shows that vector labels can be handled
coordinatewise with only a linear loss in the exponent and in the candidate-set size.

Theorem~\ref{thm:rankcompression} is conceptually aligned with the recent line of fast
linear-algebraic subroutines for matroid problems: once the relevant span is compressed to its true
rank, the expensive part of the computation depends on that rank rather than on the ambient
dimension \cite{cheung2013,harvey2009,quanrud2024,terao2026faster}. We do not solve a matroid
problem here, but the same compression principle is exactly what is needed for cycle labels.

Finally, the introduction of the cycle-label rank $\rho$ mirrors the broader structural theme visible
in recent fine-grained approximability results: the useful measure is often not the size of a hidden
witness but the dimension of the obstruction that prevents direct recovery
\cite{ghoshal2026maxcut}.

\subsection{Technical roadmap}

The argument proceeds in three steps.

\paragraph{Step 1: lift the instance to a gain graph.}
We first normalize each variable with respect to its dyadic list, reveal the remaining bits one at a
time, and record the resulting parity defects as a vector label. The outcome is a lifted graph in
which connected balanced subgraphs correspond exactly to satisfiable connected subsystems.

\paragraph{Step 2: isolate balanced regions coordinatewise.}
Each coordinate of the label vector can then be viewed as an ordinary $\F_2$-labeling, hence as a
biased-graph instance. Running the one-coordinate covering theorem on every coordinate and then
intersecting the surviving vertex sets yields a single candidate region that still contains every
hidden balanced subgraph of small cost.

\paragraph{Step 3: reduce to the intrinsic rank.}
The relevant information is carried only by the image of the cycle-label map. Projecting onto a basis
of that image preserves every balancedness query while lowering the ambient dimension from $R$ to
$\rho$, which is the parameter governing the exponential part of the running time.

\section{Preliminaries on cycle spaces and gain graphs}

We write $\F_2$ for the field with two elements. All vector spaces are over $\F_2$ unless stated
otherwise.

\subsection{Cost and balanced subgraphs}

Let $G=(V,E)$ be a finite undirected graph and let $H$ be a subgraph of $G$. Following
\cite{dabrowski2026optimal}, define
\[
\cost(H)=\abs{\delta_G(V(H))}+\abs{E(G[V(H)])\setminus E(H)}.
\]
Thus $\cost(H)$ is exactly the number of edges one must delete to isolate $V(H)$ from the rest of
$G$ and make the induced graph on $V(H)$ equal to $H$.

\subsection{Linear gain graphs}

\begin{definition}
A \emph{linear gain graph over $\F_2^r$} is a pair $\Gamma=(G,\lambda)$ where $G=(V,E)$ is an
undirected graph and $\lambda:E\to\F_2^r$ assigns a label vector to each edge. For a cycle $C$,
its \emph{cycle label} is
\[
\ell_\lambda(C)=\sum_{e\in E(C)}\lambda(e)\in\F_2^r.
\]
A cycle is \emph{balanced} if $\ell_\lambda(C)=0$, and a subgraph is balanced if all of its cycles
are balanced.
\end{definition}

Because the field has characteristic two, edge orientation plays no role.

\begin{lemma}[Coordinatewise balancedness]\label{lem:coordinatewise}
Write $\lambda=(\lambda_1,\dots,\lambda_r)$ where each $\lambda_i:E\to\F_2$ is the $i$th
coordinate. A subgraph $H$ is balanced under $\lambda$ if and only if it is balanced under every
coordinate labeling $\lambda_i$.
\end{lemma}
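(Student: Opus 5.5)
The plan is to unfold both notions of balancedness down to single cycles and then compare them coordinate by coordinate. Two facts drive the argument.

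First, by the definition, $H$ is balanced under $\lambda$ exactly when $\ell_\lambda(C)=0$ for every cycle $C$ of $H$. Likewise, $H$ is balanced under $\lambda_i$ exactly when $\ell_{\lambda_i}(C)=0$ for every cycle $C$ of $H$. The set of cycles involved is the same in both cases, since it depends only on the graph $H$ and not on the labeling.

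Second, the cycle-label map is linear and is computed coordinatewise. For any cycle $C$, the $i$th coordinate of $\ell_\lambda(C)=\sum_{e\in E(C)}\lambda(e)$ is $\sum_{e\in E(C)}\lambda_i(e)=\ell_{\lambda_i}(C)$, because addition in $\F_2^r$ is defined coordinatewise. Hence
\[
\ell_\lambda(C)=\bigl(\ell_{\lambda_1}(C),\dots,\ell_{\lambda_r}(C)\bigr).
\]

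Combining these, a vector in $\F_2^r$ is zero if and only if each of its coordinates is zero. So $\ell_\lambda(C)=0$ holds if and only if $\ell_{\lambda_i}(C)=0$ for all $i$. Quantifying over all cycles $C$ of $H$ and exchanging the two universal quantifiers (over cycles and over coordinates) gives the claimed equivalence. Both directions come from the same chain of equivalences, so they need no separate treatment.

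No step here is a real obstacle. The only point that needs care is confirming that "cycle" means the same graph-theoretic object under both labelings, which is immediate. The remark that orientation plays no role in characteristic two also ensures that $\ell_\lambda(C)$ is well defined, with no direction-dependent signs to track.
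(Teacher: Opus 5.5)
Your proof is correct and matches the paper's argument. Both write $\ell_\lambda(C)=\bigl(\ell_{\lambda_1}(C),\dots,\ell_{\lambda_r}(C)\bigr)$, use that a vector in $\F_2^r$ vanishes exactly when every coordinate does, and then quantify over all cycles of $H$.
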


\begin{proof}
For every cycle $C$ we have
\[
\ell_\lambda(C)=\bigl(\ell_{\lambda_1}(C),\dots,\ell_{\lambda_r}(C)\bigr).
\]
Hence $\ell_\lambda(C)=0$ exactly when every scalar coordinate vanishes.
\end{proof}

\subsection{Cycle-space formulation}

Fix an arbitrary orientation of $G$ and let $B_G\in\F_2^{V\times E}$ be the incidence matrix.
Its cycle space is
\[
\mathcal{C}(G):=\Ker(B_G)\subseteq\F_2^E.
\]
If we also arrange the edge labels as the columns of a matrix
\[
\Lambda_\Gamma\in\F_2^{r\times E},
\]
then the linear map
\[
L_\Gamma:\mathcal{C}(G)\to \F_2^r,
\qquad
L_\Gamma z := \Lambda_\Gamma z,
\]
is exactly the cycle-label map. A cycle $C$ is balanced if $L_\Gamma\chi_C=0$, where
$\chi_C\in\F_2^E$ is the incidence vector of $E(C)$.

\begin{definition}
The \emph{cycle-label space} of $\Gamma$ is
\[
\cyc(\Gamma):=\im(L_\Gamma)\subseteq\F_2^r,
\]
and its dimension
\[
\rho(\Gamma):=\dim \cyc(\Gamma)
\]
is called the \emph{cycle-label rank}.
\end{definition}

\begin{lemma}[Balancedness depends only on the cycle-label map]\label{lem:cycle-space-balanced}
Let $H$ be a subgraph of $G$. Then $H$ is balanced if and only if
\[
L_\Gamma z = 0 \qquad \text{for all } z\in \mathcal{C}(H).
\]
Equivalently,
\[
\Lambda_{\Gamma|H}\bigl(\mathcal{C}(H)\bigr)=\set{0}.
\]
\end{lemma}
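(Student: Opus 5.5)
The plan is to prove both directions by relating cycles of $H$ to the cycle space $\mathcal{C}(H)=\Ker(B_H)\subseteq\F_2^{E(H)}$, which we regard as a subspace of $\mathcal{C}(G)$ by extending vectors by zero outside $E(H)$. Under this identification $L_\Gamma z=\Lambda_{\Gamma|H}z$ for every $z\in\mathcal{C}(H)$. The two displayed conditions in the statement are then literally the same, and only the first equivalence needs an argument.

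For the direction ($\Leftarrow$), let $C$ be any cycle of $H$. Its incidence vector $\chi_C$ has even degree at every vertex, so $B_H\chi_C=0$ over $\F_2$ and $\chi_C\in\mathcal{C}(H)$. The hypothesis then gives $\ell_\lambda(C)=L_\Gamma\chi_C=0$. Since $C$ was arbitrary, $H$ is balanced.

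For the direction ($\Rightarrow$), we use the standard fact that over $\F_2$ the cycle space $\mathcal{C}(H)$ is spanned by incidence vectors of cycles of $H$. The argument goes as follows.
\begin{itemize}
\item[] Any nonzero $z\in\Ker(B_H)$ is the indicator of an edge set $Z$ in which every vertex has even degree.
\item[] Such a nonempty $Z$ contains a cycle $C$: take a maximal path in $Z$; its endpoint has degree at least $2$, so it has an edge returning into the path.
\item[] The set $Z\triangle E(C)$ is again an even-degree set with fewer edges.
\item[] Induction on $|Z|$ therefore writes $z$ as a sum $\sum_j\chi_{C_j}$ of incidence vectors of edge-disjoint cycles of $H$.
\end{itemize}
Linearity then gives $L_\Gamma z=\sum_j\ell_\lambda(C_j)=0$, since each $C_j$ is balanced.

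The only step with real content is this decomposition of even-degree edge sets into cycles (Veblen's theorem). Everything else is unwinding definitions. Orientation is irrelevant because signs vanish in characteristic two, so $\Ker(B_H)$ is exactly the set of even-degree edge sets, independent of the chosen orientation.
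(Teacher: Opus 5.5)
Your proof is correct and follows essentially the same route as the paper: both directions rest on $\mathcal{C}(H)$ being spanned by incidence vectors of cycles of $H$, combined with linearity of $L_\Gamma$. The paper simply asserts this spanning fact, whereas you also prove it (Veblen's decomposition) and spell out the zero-extension that identifies $\mathcal{C}(H)$ with a subspace of $\mathcal{C}(G)$.
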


\begin{proof}
The cycle space $\mathcal{C}(H)$ is spanned by the incidence vectors of simple cycles of $H$.
Since $L_\Gamma$ is linear, it vanishes on all of $\mathcal{C}(H)$ if and only if it vanishes on
every cycle vector, which is exactly the definition of balancedness.
\end{proof}

\subsection{Potential characterization}

Balancedness admits a useful coboundary formulation.

\begin{proposition}[Potential characterization]\label{prop:potential}
Let $\Gamma=(G,\lambda)$ be a linear gain graph over $\F_2^r$ and let $H$ be a connected
subgraph of $G$. The following are equivalent.
\begin{enumerate}[label=(\roman*),leftmargin=2.2em,itemsep=2pt,topsep=2pt]
  \item $H$ is balanced.
  \item There exists a map $p:V(H)\to\F_2^r$ such that for every edge $uv\in E(H)$,
  \[
  \lambda(uv)=p(u)+p(v).
  \]
\end{enumerate}
Moreover, if such a potential exists, it is unique up to adding the same vector
$c\in\F_2^r$ to all vertices of the connected component.
\end{proposition}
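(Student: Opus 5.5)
We prove the two implications separately and then the uniqueness claim; the construction of the potential uses a spanning tree of $H$.

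For (ii)$\Rightarrow$(i), suppose $p$ is a potential and let $C$ be any cycle of $H$ with vertices $v_0,v_1,\dots,v_t=v_0$. Then
\[
\ell_\lambda(C)=\sum_{i=0}^{t-1}\bigl(p(v_i)+p(v_{i+1})\bigr).
\]
Each vertex of $C$ occurs exactly twice in this sum, so the sum is $0$ in characteristic two. Hence every cycle is balanced. Equivalently, $\lambda|_{E(H)}=B_H^{\top}p$ lies in the cut space, which is orthogonal to $\mathcal{C}(H)$. By Lemma~\ref{lem:cycle-space-balanced}, $H$ is balanced.

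For (i)$\Rightarrow$(ii), fix a spanning tree $T$ of $H$ (it exists because $H$ is connected) and a root $v_0$. Define $p(v_0)=0$ and let $p(v)$ be the sum of $\lambda(e)$ over the unique $v_0$--$v$ path in $T$.

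\begin{itemize}
  \item For a tree edge $uv$, the two root paths differ by exactly the edge $uv$, so $p(u)+p(v)=\lambda(uv)$.
  \item For a non-tree edge $uv$, let $C$ be its fundamental cycle, formed by $uv$ and the $T$-path between $u$ and $v$. The label of that path is $p(u)+p(v)$, because the common prefix of the two root paths cancels mod $2$. Balancedness gives $\ell_\lambda(C)=\lambda(uv)+p(u)+p(v)=0$, which is the required identity.
\end{itemize}

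The one point needing care is this cancellation of the shared prefix of root paths in the non-tree case. Over $\F_2$ it is immediate, and it is also where the definition of balancedness enters.

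For uniqueness, let $p,p'$ be two potentials and set $q=p+p'$. Then $q(u)+q(v)=0$ for every edge $uv$ of $H$, so $q(u)=q(v)$ along every edge. Since $H$ is connected, $q$ is constant, say equal to $c$. Conversely, adding a constant $c$ to $p$ preserves every edge equation, because $c+c=0$.
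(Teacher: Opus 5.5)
Your proof is correct and follows essentially the same route as the paper: sums of labels along root paths define the potential, the sum around a cycle vanishes because each vertex appears twice, and uniqueness holds because the difference of two potentials is constant on the connected subgraph. The only difference is that you use paths in a spanning tree, so $p$ is automatically well defined and balancedness is needed only on fundamental cycles. The paper instead allows arbitrary root paths and must argue well-definedness through the cycle decomposition of $P\triangle P'$.
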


\begin{proof}
Assume first that $H$ is balanced. Fix a root $r\in V(H)$ and, for each vertex $v$, choose any
path $P_{r,v}$ in $H$. Define
\[
p(v):=\sum_{e\in E(P_{r,v})}\lambda(e).
\]
This is well defined because if $P$ and $P'$ are two $r$--$v$ paths, then their symmetric
difference is a cycle decomposition; balancedness forces the total label on each cycle, hence on
$P\triangle P'$, to vanish. For an edge $uv$, concatenating an $r$--$u$ path, the edge $uv$, and
a $v$--$r$ path shows that $\lambda(uv)=p(u)+p(v)$.

Conversely, if such a potential exists, then for every cycle $v_0v_1\cdots v_{t-1}v_0$ we have
\[
\sum_{i=0}^{t-1}\lambda(v_iv_{i+1})
=
\sum_{i=0}^{t-1}\bigl(p(v_i)+p(v_{i+1})\bigr)=0,
\]
because every vertex potential appears exactly twice. Hence every cycle is balanced. The
uniqueness claim follows because the difference of two valid potentials is constant on every edge
and therefore constant on the connected component.
\end{proof}

\subsection{Cut-space factorization and rank identities}

The potential characterization can be rewritten in matrix form. Let $H$ be a connected subgraph,
let $B_H\in\F_2^{V(H)\times E(H)}$ be an incidence matrix, and let
$\Lambda_H\in\F_2^{r\times E(H)}$ be the label matrix of $\Gamma|_H$.

\begin{proposition}[Cut-space factorization]\label{prop:cutspace}
For a connected subgraph $H$, the following are equivalent.
\begin{enumerate}[label=(\roman*),leftmargin=2.2em,itemsep=2pt,topsep=2pt]
  \item $H$ is balanced.
  \item There exists a matrix $P\in\F_2^{r\times V(H)}$ such that
  \[
  \Lambda_H = P B_H.
  \]
  \item Every row of $\Lambda_H$ belongs to the row space of $B_H$.
\end{enumerate}
\end{proposition}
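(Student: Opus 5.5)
The plan is to prove the cycle (i)$\Rightarrow$(ii)$\Rightarrow$(iii)$\Rightarrow$(i), using Proposition~\ref{prop:potential} for the first implication and Lemma~\ref{lem:cycle-space-balanced} for the last. First I fix conventions. Over $\F_2$ the column of $B_H$ indexed by an edge $e=uv$ is $\chi_u+\chi_v\in\F_2^{V(H)}$, regardless of orientation. Hence for any $P\in\F_2^{r\times V(H)}$, whose column at $v$ I denote $p(v)\in\F_2^r$, the column of $PB_H$ at $e=uv$ is exactly $p(u)+p(v)$. This one-line computation is the dictionary between matrices and potentials.

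\emph{(i)$\Rightarrow$(ii).} Since $H$ is connected and balanced, Proposition~\ref{prop:potential} gives a potential $p:V(H)\to\F_2^r$ with $\lambda(uv)=p(u)+p(v)$ for all $uv\in E(H)$. I take $P$ to be the matrix with columns $p(v)$. By the dictionary, $PB_H$ and $\Lambda_H$ agree column by column.

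\emph{(ii)$\Rightarrow$(iii).} If $\Lambda_H=PB_H$, then the $i$th row of $\Lambda_H$ equals $(\text{row } i \text{ of } P)\,B_H$. This is a linear combination of rows of $B_H$, so it lies in $\row(B_H)$.

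\emph{(iii)$\Rightarrow$(i).} For each row $i$, write row $i$ of $\Lambda_H$ as $q_i^{\top}B_H$ for some $q_i\in\F_2^{V(H)}$. Take $z\in\mathcal{C}(H)=\Ker(B_H)$. Then the $i$th coordinate of $\Lambda_H z$ is $q_i^{\top}B_Hz=0$, so $\Lambda_Hz=0$ for every $z$ in the cycle space. Lemma~\ref{lem:cycle-space-balanced} then gives that $H$ is balanced. Alternatively, stacking the $q_i^{\top}$ into $P$ recovers (ii), and the converse half of Proposition~\ref{prop:potential} applies.

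There is no real obstacle; the only point needing care is the identification $\mathcal{C}(H)=\Ker(B_H)$ for the restricted incidence matrix, together with the observation that $\Lambda_H$ restricted to $\mathcal{C}(H)$ is the same as $L_\Gamma$ applied to cycles of $H$, with zero entries outside $E(H)$. Connectivity is used only in (i)$\Rightarrow$(ii), through Proposition~\ref{prop:potential}. Even that use is inessential: applying the proposition componentwise would give the same conclusion.
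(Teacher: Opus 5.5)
Your proof is correct and follows essentially the paper's route: (i)$\Rightarrow$(ii) assembles the potential from Proposition~\ref{prop:potential} into $P$, and (ii)$\Rightarrow$(iii) reads off that each row of $PB_H$ lies in $\row(B_H)$. Your direct (iii)$\Rightarrow$(i), via $\row(B_H)$ annihilating $\Ker(B_H)=\mathcal{C}(H)$ and Lemma~\ref{lem:cycle-space-balanced}, is only a minor shortcut for the paper's (iii)$\Rightarrow$(ii)$\Rightarrow$(i) by stacking coefficients, which you also note as an alternative.
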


\begin{proof}
The equivalence of (i) and (ii) is just Proposition~\ref{prop:potential} written simultaneously
for all $r$ coordinates: put the vertex potentials in the rows of $P$. The equivalence of (ii) and
(iii) is immediate because the row space of $P B_H$ is contained in the row space of $B_H$, while
if every row of $\Lambda_H$ belongs to the row space of $B_H$, one may choose one row of $P$ for
each row of $\Lambda_H$ and stack them.
\end{proof}

The cycle-label rank also admits a quotient-space description that will be useful later.
Let $\mathcal{B}(G):=\row(B_G)\subseteq \F_2^E$ denote the cut space.

\begin{proposition}[Rank identity]\label{prop:rankidentity}
For every linear gain graph $\Gamma=(G,\lambda)$ with label matrix $\Lambda_\Gamma$,
\[
\rho(\Gamma)
=
\rankop\!\begin{bmatrix} B_G \\ \Lambda_\Gamma \end{bmatrix}
-
\rankop(B_G).
\]
Equivalently,
\[
\rho(\Gamma)=\dim \bigl((\row(B_G)+\row(\Lambda_\Gamma))/\row(B_G)\bigr).
\]
\end{proposition}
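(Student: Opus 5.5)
The plan is to identify $\cyc(\Gamma)$ with a quotient by a duality argument over $\F_2$. The key fact is that the cycle space and the cut space are orthogonal complements in $\F_2^E$ with respect to the standard bilinear form. We have $\mathcal{C}(G)=\Ker(B_G)=\row(B_G)^{\perp}=\mathcal{B}(G)^{\perp}$. This is just the statement that $z\in\Ker(B_G)$ if and only if $z$ is orthogonal to every row of $B_G$. Nondegeneracy of the form on the finite-dimensional space $\F_2^E$ then gives $\mathcal{B}(G)=\mathcal{C}(G)^{\perp}$ as well.

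The rank $\rho$ is the rank of the restriction of $\Lambda_\Gamma$ to $\mathcal{C}(G)$. I would compute it as the dimension of the row space of that restricted map, viewed as a space of linear functionals on $\mathcal{C}(G)$. Each row $w\in\F_2^E$ of $\Lambda_\Gamma$ defines the functional $z\mapsto w\cdot z$ on $\mathcal{C}(G)$. The rank of $L_\Gamma$ equals the dimension of the span of these $r$ functionals in $\mathcal{C}(G)^*$, since row rank equals column rank.

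The restriction map $\F_2^E\to\mathcal{C}(G)^*$, $w\mapsto (w\cdot -)|_{\mathcal{C}(G)}$, is surjective. Its kernel is $\mathcal{C}(G)^{\perp}=\row(B_G)$. Hence the span of the restricted row functionals is isomorphic to the image of $\row(\Lambda_\Gamma)$ in $\F_2^E/\row(B_G)$, namely $(\row(B_G)+\row(\Lambda_\Gamma))/\row(B_G)$. This gives the second formula.

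The first formula follows from dimension counting. We have $\dim(\row(B_G)+\row(\Lambda_\Gamma))=\rankop\begin{bmatrix}B_G\\ \Lambda_\Gamma\end{bmatrix}$, and the quotient has dimension equal to this minus $\rankop(B_G)$.

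The only step needing care is the double-orthogonal identity $(\row B_G^{\perp})^{\perp}=\row B_G$ over $\F_2$. Isotropic vectors exist over $\F_2$, but that does not matter here. The identity holds because the standard form is nondegenerate and $\dim U^{\perp}=|E|-\dim U$ for every subspace $U$. I expect this to be the main point to verify, and it is routine.
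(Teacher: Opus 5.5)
Your proof is correct and follows the same route as the paper. Both identify $\F_2^E/\row(B_G)$ with the dual of $\mathcal{C}(G)=\Ker(B_G)$ and observe that the rows of $\Lambda_\Gamma$ restrict to the functionals defining the cycle-label map; you additionally supply the details the paper leaves implicit (the identity $\mathcal{C}(G)^{\perp}=\row(B_G)$, surjectivity of restriction with kernel $\row(B_G)$, and the row-rank computation of $\rho$), all of which check out.
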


\begin{proof}
The quotient space $\F_2^E/\row(B_G)$ is canonically dual to $\Ker(B_G)=\mathcal{C}(G)$, and the
restriction of the rows of $\Lambda_\Gamma$ to $\mathcal{C}(G)$ is precisely the cycle-label map.
Thus the image dimension of $L_\Gamma$ equals the dimension added when the rows of
$\Lambda_\Gamma$ are adjoined to the row space of $B_G$. This proves the formula.
\end{proof}

\begin{corollary}[Minimal coordinate dimension]\label{cor:minimal-dimension}
Suppose $\widetilde\lambda:E(G)\to\F_2^q$ is another labeling such that a subgraph of $G$ is
balanced under $\lambda$ if and only if it is balanced under $\widetilde\lambda$. Then
$q\ge \rho(\Gamma)$.
\end{corollary}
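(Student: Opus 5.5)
We will show that the kernel of the cycle-label map is an invariant of the family of balanced subgraphs. Since the kernel determines the rank, the inequality then follows from a dimension count.

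\emph{Step 1: the kernel is determined by balancedness.} For a labeling $\mu$ write $K_\mu:=\Ker(L_\mu)\subseteq\mathcal{C}(G)$. By rank--nullity,
\[
\rho(\Gamma)=\dim\mathcal{C}(G)-\dim K_\lambda .
\]
We claim that $K_\lambda$ equals the span of the incidence vectors $\chi_C$ of those cycles $C$ that are balanced under $\lambda$. Each such single cycle $C$, viewed as a subgraph, is balanced under $\lambda$ exactly when $L_\lambda\chi_C=0$. So if balancedness of subgraphs agrees for $\lambda$ and $\widetilde\lambda$, the two labelings have the same balanced cycles, and the claim gives $K_\lambda=K_{\widetilde\lambda}$.

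\emph{Step 2: proving the claim.} This is the main obstacle. One inclusion is trivial: every balanced cycle lies in the kernel. For the reverse inclusion, we must show that every $z\in K_\lambda$ is a sum of balanced cycle vectors. We would fix a spanning forest $T$. The fundamental cycles $C_e$, for $e\notin T$, form a basis of $\mathcal{C}(G)$.

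Suppose $z=\sum_{e\in A}\chi_{C_e}$ and $L_\lambda z=0$. We would rewrite $z$ by a path-pairing argument. Alternatively, and more simply, we can avoid the claim altogether. Take any $z\in K_\lambda$ and decompose it as an edge-disjoint union of simple cycles $C_1,\dots,C_t$. Form the subgraph $H$ with edge set $\operatorname{supp}(z)$. Then $\mathcal{C}(H)$ contains $z$, but $H$ need not be balanced, since the individual $C_i$ may not be. So the subgraph route alone does not directly work, and we fall back on the claim.

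To prove the claim, we use that over $\F_2$ the set of cycles is rich. Given a cycle vector $z$ with $L_\lambda z=0$, write $z$ as a sum of simple cycles $C_1+\dots+C_t$. Then induct on $t$ by merging: when $C_1$ and $C_2$ are joined by a path $P$, we have
\[
C_1+C_2=(C_1\cup P\cup C_2\text{-theta combinations}).
\]
A cleaner route is to work with the sets $\{C: L_\lambda\chi_C=0\}$ directly. We would only need the weaker statement
\[
\dim\spann\{\chi_C: C\text{ balanced}\}\le \dim K_\lambda .
\]
This suffices, because these spans coincide for $\lambda$ and $\widetilde\lambda$. Replacing $K$ by this span $S$ throughout, we get
\[
\dim S\le\dim K_{\widetilde\lambda}=\dim\mathcal{C}(G)-\rho(\widetilde\lambda),
\]
and hence
\[
\rho(\widetilde\lambda)\le \dim\mathcal{C}(G)-\dim S .
\]
This inequality points the wrong way, so the equality $S=K_\lambda$ really is needed at least for $\lambda$.

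\emph{Step 3: conclusion.} Once $K_\lambda=K_{\widetilde\lambda}$ is established, we obtain
\[
\rho(\Gamma)=\dim\mathcal{C}(G)-\dim K_\lambda=\rho(G,\widetilde\lambda)=\dim\im L_{\widetilde\lambda}\le q,
\]
because $\im L_{\widetilde\lambda}\subseteq\F_2^q$. For the equality $S=K_\lambda$, we would try the standard biased-graph fact that balanced cycles of a gain graph over an elementary abelian $2$-group span the kernel. The plan is to prove it by choosing a basis of $K_\lambda$ and realizing each basis vector as a sum of balanced theta-graph cycles. This uses the theta property: in a theta graph, if two of its cycles have equal label, the third is balanced.
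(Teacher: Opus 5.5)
\textbf{There is a genuine gap, and it cannot be closed, because the statement itself is false.}

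Your Step~2 needs the claim that $\Ker(L_\lambda)$ is spanned by the incidence vectors of $\lambda$-balanced cycles. That claim is false, and no standard biased-graph fact asserts it. The corollary fails as well. Let $G$ be two triangles $C_1,C_2$ sharing a vertex; its only cycles are $C_1$ and $C_2$. Choose one edge $f_i$ on each $C_i$ and set
\begin{itemize}
\item $\lambda(f_1)=(1,0)$ and $\lambda(f_2)=(0,1)$ in $\F_2^2$;
\item $\widetilde\lambda(f_1)=\widetilde\lambda(f_2)=1$ in $\F_2$;
\item all other labels zero.
\end{itemize}
Both triangles are unbalanced under both labelings. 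So in both cases a subgraph is balanced exactly when it contains neither triangle completely, and the hypothesis holds. Yet $\rho(\Gamma)=2$, since the cycle labels $(1,0),(0,1)$ span $\F_2^2$, while $q=1$. With $t$ triangles, labeling them by the $t$ standard basis vectors of $\F_2^t$ versus the all-ones scalar labeling gives $\rho=t$ against $q=1$.

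The same example refutes your claim directly. Under $\widetilde\lambda$ there are no balanced cycles at all, yet $\chi_{C_1}+\chi_{C_2}\in\Ker(L_{\widetilde\lambda})\neq\{0\}$. The theta property cannot rescue it either, since this graph contains no theta subgraph.

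The obstruction is the one you ran into midway through your proposal. Balancedness of a subgraph $H$ only detects whether the special subspace $\mathcal{C}(H)$ lies inside the kernel. A subspace of $\mathcal{C}(G)$ is not determined by which of these special subspaces it contains. The paper's one-line proof asserts exactly the step you were trying to justify (``same balanced subgraphs, hence the same kernel on $\mathcal{C}(G)$'') without argument, and it is false for the same reason.

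What is true is the statement under the stronger hypothesis $\Ker(L_\lambda)=\Ker(L_{\widetilde\lambda})$. This says every element of the cycle space has zero total label under $\lambda$ if and only if it does under $\widetilde\lambda$. Your Step~3 proves that version verbatim: $\rho(\Gamma)=\dim\mathcal{C}(G)-\dim\Ker(L_{\widetilde\lambda})=\dim\im L_{\widetilde\lambda}\le q$.
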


\begin{proof}
Balanced subgraphs are determined by the kernel of the cycle-label map. If the two labelings induce
exactly the same balanced subgraphs, then their cycle-label maps have the same kernel on
$\mathcal{C}(G)$; therefore the quotient $\mathcal{C}(G)/\Ker(L_\Gamma)$ injects into $\F_2^q$.
But this quotient has dimension $\rho(\Gamma)$.
\end{proof}

\subsection{Fundamental-cycle matrices}

Let $T$ be a spanning forest of $G$. For every non-tree edge $f\in E\setminus E(T)$, let $C_f$
denote the fundamental cycle of $f$ with respect to $T$. If the endpoints of $f$ are $u$ and $v$,
and if
\[
\sigma_T(x):=\sum_{e\in E(P_T(r,x))}\lambda(e)
\]
denotes the prefix sum from the root $r$ of the corresponding tree to $x$, then
\[
\ell_\lambda(C_f)=\lambda(f)+\sigma_T(u)+\sigma_T(v).
\]
Consequently all fundamental cycle labels can be computed in $\widetilde{O}(mr)$ time by a
single tree traversal.

Arrange the fundamental cycle labels as the columns of
\[
M_\Gamma=
\begin{bmatrix}
\ell_\lambda(C_{f_1}) & \ell_\lambda(C_{f_2}) & \cdots & \ell_\lambda(C_{f_\mu})
\end{bmatrix}
\in\F_2^{r\times \mu},
\]
where $\mu=\abs{E}-\abs{V}+c(G)$ is the cycle-space dimension. Then
\[
\rho(\Gamma)=\rankop(M_\Gamma),
\]
since fundamental cycles span the cycle space.

\section{Coordinatewise balanced covering}

We now prove the structural theorem that drives the algorithm. The one-coordinate ingredient is the
following result of Dabrowski et al. \cite{dabrowski2026optimal}.

\begin{theorem}[One-coordinate balanced covering]\label{thm:onecoord}
There is a randomized algorithm that, given a biased graph $(G,\mathcal{B})$ and an integer $k$,
outputs a pair $(S,F)$ such that:
\begin{enumerate}[label=(\roman*),leftmargin=2.2em,itemsep=2pt,topsep=2pt]
  \item $\delta_G(S)\subseteq F$;
  \item $(G-F)[S]$ is balanced;
  \item for every balanced subgraph $H$ with $\cost(H)\le k$, with probability at least
  $2^{-O(k^2)}$, we have $V(H)\subseteq S$ and at most $k$ edges of $F$ are incident with
  $V(H)$.
\end{enumerate}
\end{theorem}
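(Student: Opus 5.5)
Theorem~\ref{thm:onecoord} is imported from \cite{dabrowski2026optimal}, so my plan is to check carefully that its hypotheses and conclusion match the setting here, rather than reprove it from scratch. The key compatibility points are these.

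\begin{itemize}
  \item An $\F_2$-labeling $\lambda_i:E\to\F_2$ determines a biased graph $(G,\mathcal{B}_i)$. Here $\mathcal{B}_i$ is the family of cycles $C$ with $\ell_{\lambda_i}(C)=0$.
  \item This family satisfies the theta property. If two cycles of a theta subgraph have label $0$, the third cycle is their symmetric difference, so its label is $0$ by linearity (Lemma~\ref{lem:cycle-space-balanced}).
  \item The notion of a balanced subgraph in $(G,\mathcal{B}_i)$ therefore coincides with balancedness under $\lambda_i$.
  \item The cost function defined in the preliminaries is literally the one of \cite{dabrowski2026optimal}.
\end{itemize}

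Under this dictionary, items (i)--(iii) are exactly their balanced-covering statement, with $2^{-O(k^2)}$ success probability and $O^*(4^k)$ running time. A membership oracle for $\mathcal{B}_i$ is available in polynomial time, by summing labels along a cycle or via the potentials of Proposition~\ref{prop:potential}.

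For completeness I would also indicate why the statement is true, following the structure of the cited proof. The algorithm maintains a candidate region $S$, initially $V(G)$, and a set $F$, initially $\delta_G(S)=\emptyset$. While $(G-F)[S]$ contains an unbalanced cycle, it performs a randomized shrinking round. A hidden balanced $H$ with $\cost(H)\le k$ is cut off from the rest of $G$ by its at most $k$ boundary and internal deletion edges. The algorithm guesses, with suitable probability, a seed vertex and then samples an important cut (with respect to balancedness, via potentials) of size at most $k$ around the balanced region containing it. It shrinks $S$ accordingly and adds the cut to $F$. Two properties of this step drive the argument.
\begin{itemize}
  \item If every guess in a round is consistent with $H$, then $V(H)$ stays inside $S$.
  \item The edges of $F$ incident with $V(H)$ are dominated by the deletion edges of $H$, since an important cut is no larger than the cut witnessed by $H$.
\end{itemize}
The standard bound of $4^k$ important cuts of size at most $k$ gives a success probability of $4^{-k}$ for each round.

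The main obstacle is bounding the number of rounds that must succeed simultaneously by $O(k)$. This is what turns the per-round probability $4^{-k}$ into the overall $2^{-O(k^2)}$, and also what keeps the number of $F$-edges incident with $V(H)$ at most $k$ rather than growing with the number of rounds. I would get this from a potential argument: each successful round either strictly increases the number of edges of $H$'s deletion set that are forced into $F$, or strictly decreases an important-cut measure bounded by $k$. This argument is exactly where the cited work does the real combinatorial effort, so in this paper I would defer to \cite{dabrowski2026optimal} for it. I would only record the dictionary above, which is what the coordinatewise tensoring in Theorem~\ref{thm:coordcover} needs.
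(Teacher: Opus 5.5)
Your proposal matches the paper, which states this theorem without proof and imports it from \cite{dabrowski2026optimal}. Your dictionary is exactly the compatibility check the paper relies on implicitly when it applies the theorem to the coordinate projections $\Gamma_i$ in the proof of Theorem~\ref{thm:coordcover}: the zero-label cycles of $\lambda_i$ form a linear class by the theta property, balance and $\cost$ agree with the cited notions, and a polynomial-time balance oracle exists. Your sketch of the cited argument is heuristic rather than load-bearing, which is fine since, like the paper, you defer the actual combinatorial content to the citation.
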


\begin{proof}[Proof of Theorem~\ref{thm:coordcover}]
Write $\lambda=(\lambda_1,\dots,\lambda_r)$ with each $\lambda_i:E(G)\to\F_2$. For each
$i\in[r]$, let $\Gamma_i=(G,\lambda_i)$ be the scalar-labeled biased graph obtained by projecting
to coordinate $i$. Apply Theorem~\ref{thm:onecoord} independently to every $\Gamma_i$ with the
same parameter $k$, obtaining pairs $(S_i,F_i)$. Define
\[
S:=\bigcap_{i=1}^r S_i,
\qquad
F:=\bigcup_{i=1}^r F_i.
\]

We verify the required properties.

\smallskip
\noindent\emph{Property (i).}
Let $e=uv\in\delta_G(S)$, where $u\in S$ and $v\notin S$. Since $u\in S_i$ for all $i$, but
$v\notin S$, there exists some coordinate $j$ with $v\notin S_j$. Therefore
$e\in\delta_G(S_j)\subseteq F_j\subseteq F$.

\smallskip
\noindent\emph{Property (ii).}
Fix a coordinate $i$. Since $S\subseteq S_i$ and $F_i\subseteq F$, the graph $(G-F)[S]$ is a
subgraph of $(G-F_i)[S_i]$, hence it is balanced in coordinate $i$ by Theorem~\ref{thm:onecoord}.
As this holds for every $i$, Lemma~\ref{lem:coordinatewise} implies that $(G-F)[S]$ is balanced
under the full vector labeling.

\smallskip
\noindent\emph{Property (iii).}
Let $H$ be a balanced subgraph with $\cost(H)\le k$. If coordinate $i$ succeeds, then
$V(H)\subseteq S_i$ and at most $k$ edges of $F_i$ are incident with $V(H)$. Therefore, if all
coordinates succeed simultaneously, then
\[
V(H)\subseteq \bigcap_{i=1}^r S_i=S
\]
and
\[
\abs{\set{e\in F : e\text{ has an endpoint in }V(H)}}
\le
\sum_{i=1}^r \abs{\set{e\in F_i : e\text{ has an endpoint in }V(H)}}
\le rk.
\]
The success probability is the product of the coordinatewise success probabilities, hence at least
$2^{-O(k^2r)}$.
\end{proof}

\begin{remark}
The proof is deliberately modular. Any future improvement in one-coordinate balanced covering
immediately tensorizes to linear gain graphs, with only a linear loss in the number of coordinates.
\end{remark}

\begin{remark}[Amplification]
Repeating the algorithm of Theorem~\ref{thm:coordcover} independently
$2^{O(k^2r)}\log(1/\delta)$ times boosts the success probability to at least $1-\delta$ while
changing the running time only by the same multiplicative factor.
\end{remark}

\section{Rank compression to the cycle-label space}

This section proves Theorem~\ref{thm:rankcompression} and makes explicit why only the image of the
cycle-label map matters.

\subsection{Injective projections}

\begin{lemma}[Injective projections preserve balancedness]\label{lem:projection}
Let $U\subseteq \F_2^R$ be the image of the cycle-label map of $\Gamma=(G,\lambda)$, and let
$\pi:\F_2^R\to\F_2^q$ be linear and injective on $U$. Then a subgraph of $G$ is balanced under
$\lambda$ if and only if it is balanced under $\pi\circ\lambda$.
\end{lemma}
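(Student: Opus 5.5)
The plan is to reduce everything to Lemma~\ref{lem:cycle-space-balanced}, which says that a subgraph $H$ is balanced exactly when the cycle-label map vanishes on $\mathcal{C}(H)$. The first step is to record how the cycle-label map behaves under composition with $\pi$. The label matrix of $\pi\circ\lambda$ is $\Pi\Lambda_\Gamma$, where $\Pi$ is the matrix of $\pi$. Hence for every $z\in\mathcal{C}(G)$ we have $L_{\pi\circ\lambda}z=\pi(L_\Gamma z)$. This is just linearity, applied columnwise to the sum $\sum_{e}z_e\lambda(e)$.

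The second step handles a fixed subgraph $H$ of $G$. Its cycle space satisfies $\mathcal{C}(H)\subseteq\mathcal{C}(G)$, because a vector supported on $E(H)$ lying in the kernel of $B_H$ also lies in the kernel of $B_G$. Consequently $L_\Gamma z\in U$ for every $z\in\mathcal{C}(H)$.

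The two directions then split as follows.
\begin{itemize}
\item If $H$ is balanced under $\lambda$, then $L_\Gamma z=0$ for all such $z$. Applying $\pi$ gives $L_{\pi\circ\lambda}z=\pi(0)=0$, so $H$ is balanced under $\pi\circ\lambda$.
\item Conversely, suppose $\pi(L_\Gamma z)=0$ for all $z\in\mathcal{C}(H)$. Since $L_\Gamma z\in U$ and $\pi|_U$ is injective and linear, it has trivial kernel on $U$, so $L_\Gamma z=0$. Lemma~\ref{lem:cycle-space-balanced} then gives balancedness under $\lambda$.
\end{itemize}

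One could equally argue cycle by cycle, using $\ell_{\pi\circ\lambda}(C)=\pi(\ell_\lambda(C))$ and $\ell_\lambda(C)\in U$. That avoids the cycle-space lemma entirely, and I would mention it as the shortest route.

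No step here is a genuine obstacle. The only point requiring care is the containment $\ell_\lambda(C)\in U$ for cycles $C$ of an arbitrary subgraph. Injectivity is assumed only on $U$, not on all of $\F_2^R$, so we need $\chi_C\in\mathcal{C}(G)$. This holds since every cycle of $H$ is a cycle of $G$.
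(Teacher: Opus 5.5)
Your proof is correct and is essentially the paper's argument. The paper also argues cycle by cycle: it uses $\ell_\lambda(C)\in U$ and injectivity of $\pi$ on $U$ to get $\ell_\lambda(C)=0\iff\ell_{\pi\circ\lambda}(C)=0$, which is your ``shortest route''; your main version only repackages this through Lemma~\ref{lem:cycle-space-balanced}, and your remark that cycles of $H$ are cycles of $G$ (so their labels lie in $U$) makes explicit a step the paper leaves implicit.
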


\begin{proof}
Every cycle label belongs to $U$. Hence for any cycle $C$,
\[
\ell_\lambda(C)=0
\iff
\pi(\ell_\lambda(C))=0
\iff
\ell_{\pi\circ\lambda}(C)=0.
\]
Now apply this equivalence to all cycles in the subgraph.
\end{proof}

\subsection{Explicit basis projection}

Let $M_\Gamma\in\F_2^{R\times \mu}$ be the fundamental-cycle matrix from the previous section,
and let $\rho=\rankop(M_\Gamma)$. Choose a basis of its column space and form a matrix
$Q\in\F_2^{R\times \rho}$ whose columns are that basis.

\begin{proposition}[Compression by basis projection]\label{prop:basis-projection}
There exists a linear map $P\in\F_2^{\rho\times R}$ such that $P Q = I_\rho$. With this choice,
$P$ is injective on $\im(M_\Gamma)$, and the labeling
\[
\widehat\lambda(e):=P\lambda(e)
\]
preserves balancedness exactly.
\end{proposition}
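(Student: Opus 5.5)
The plan is to prove the three assertions in order: existence of a left inverse $P$ of $Q$, injectivity of $P$ on $\im(M_\Gamma)$, and preservation of balancedness. The last step is a direct application of Lemma~\ref{lem:projection}.

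\emph{Existence of $P$.} The columns of $Q\in\F_2^{R\times\rho}$ are linearly independent, so $\rankop(Q)=\rho$. Hence $Q$ has some $\rho\times\rho$ invertible submatrix $Q_J$, formed by a set $J\subseteq[R]$ of $\rho$ rows; these can be found by Gaussian elimination. Define $P$ as follows: its columns indexed by $J$ form $Q_J^{-1}$, and all its other columns are zero. Then $PQ=Q_J^{-1}Q_J=I_\rho$. Equivalently, extend the columns of $Q$ to a basis of $\F_2^R$ and let $P$ read off the first $\rho$ coordinates in that basis. Either construction gives a linear left inverse.

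\emph{Injectivity on $\im(M_\Gamma)$.} Every $u\in\im(M_\Gamma)$ can be written as $u=Qx$ for a unique $x\in\F_2^\rho$, because the columns of $Q$ form a basis of this space. Then $Pu=PQx=x$. So if $Pu=0$, then $x=0$ and hence $u=0$, and $P$ restricted to $\im(M_\Gamma)$ is injective.

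\emph{Preservation of balancedness.} I must identify $\im(M_\Gamma)$ with the cycle-label space $U=\cyc(\Gamma)=\im(L_\Gamma)$. The columns of $M_\Gamma$ are the labels $L_\Gamma\chi_{C_f}$ of the fundamental cycles. The vectors $\chi_{C_f}$ span $\mathcal{C}(G)$, and $L_\Gamma$ is linear, so $\im(M_\Gamma)=\im(L_\Gamma)$. This is the same fact used in the identity $\rho(\Gamma)=\rankop(M_\Gamma)$. Now apply Lemma~\ref{lem:projection} with $\pi=P$ and $q=\rho$. Since $\pi\circ\lambda=\widehat\lambda$, a subgraph of $G$ is balanced under $\lambda$ if and only if it is balanced under $\widehat\lambda$.

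The only step needing some care is the identification $\im(M_\Gamma)=\cyc(\Gamma)$, namely that fundamental cycles span the whole cycle space, one forest per component. This is standard and already invoked earlier in the paper, so I expect no real obstacle; everything else is elementary linear algebra.
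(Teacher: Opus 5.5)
Your proof is correct and follows essentially the same route as the paper: take a left inverse $P$ of $Q$, show injectivity on $\im(M_\Gamma)$ by writing $u=Q\alpha$ so that $Pu=\alpha$, and conclude with Lemma~\ref{lem:projection}. You additionally spell out two things the paper leaves implicit: an explicit construction of $P$, and the identification $\im(M_\Gamma)=\cyc(\Gamma)$, which holds because the fundamental cycles span $\mathcal{C}(G)$ and which the paper uses tacitly when it passes to ``the cycle-label image.''
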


\begin{proof}
Because the columns of $Q$ are linearly independent, they admit a left inverse $P$. If
$u\in\im(M_\Gamma)$ and $Pu=0$, write $u=Q\alpha$; then
$0=Pu=P Q\alpha=\alpha$, hence $u=0$. Thus $P$ is injective on the cycle-label image, and the
claim follows from Lemma~\ref{lem:projection}.
\end{proof}

\subsection{Computing the basis fast}

The remaining issue is algorithmic: how does one compute a basis of the cycle-label image without
paying for the full ambient dimension in the expensive part? The answer is to use the same pattern
that appears in fast rank computation and recent span-sensitive matroid routines
\cite{cheung2013,harvey2009,quanrud2024,terao2026faster}: assemble a matrix whose columns are the
relevant objects, extract a basis of its image, and then work only in the compressed coordinates.
In our setting the relevant matrix is $M_\Gamma$.

\begin{proof}[Proof of Theorem~\ref{thm:rankcompression}]
Construct a spanning forest $T$ of $G$ and compute all fundamental cycle labels, obtaining the
matrix $M_\Gamma$. This takes $\widetilde{O}(mR)$ time by the prefix-sum formula above.
Then compute a column basis of $M_\Gamma$ of size $\rho$ together with a left inverse on that
basis in time $\widetilde{O}(\rho^\omega)$ using standard fast rank and basis extraction routines
\cite{cheung2013}; one may view the span-computation perspective of \cite{terao2026faster} as a
conceptual explanation for why the running time depends on $\rho$ rather than on the full number
of candidate cycle labels. Set $\widehat\lambda(e)=P\lambda(e)$ for each edge $e$. By
Proposition~\ref{prop:basis-projection}, the new labeling preserves balancedness exactly.
\end{proof}

\begin{remark}
The compression theorem is completely independent of the origin of the labels. Any discrete
problem that reduces to deleting edges from an $\F_2^R$-labeled gain graph may replace $R$ by the
intrinsic cycle-label rank $\rho$ at negligible extra cost.
\end{remark}

\section{Lifting dyadic coset systems to gain graphs}

We now make the reduction from list-constrained modular equations to gain graphs precise.
Throughout, the modulus is $\Zmod{2^d}$.

\subsection{Instance model}

\begin{definition}
An instance of \textsc{Coset-List Min-2-Lin}$^{\pm}$ over $\Zmod{2^d}$ consists of:
\begin{enumerate}[label=(\alph*),leftmargin=2.2em,itemsep=2pt,topsep=2pt]
  \item variables $x_v\in\Zmod{2^d}$;
  \item dyadic lists $L_v=a_v+2^{\ell(v)}\Zmod{2^d}$ with $0\le \ell(v)\le d$;
  \item binary constraints of type $x_u=x_v$, $x_u=-x_v$, or $x_u=2x_v$;
  \item unary anchor constraints $x_v=b$;
  \item optional positive weights on constraints.
\end{enumerate}
The objective is to delete a minimum-cardinality set of constraints so that the remainder is
satisfiable. In the weighted variant studied here, among all feasible deletion sets of size at most
$k$ we minimize the total deleted weight.
\end{definition}

\subsection{Normalization by the lists}

For each variable $v$, every allowed value can be written uniquely as
\[
x_v = a_v + 2^{\ell(v)} y_v,
\qquad
y_v\in \Zmod{2^{d-\ell(v)}}.
\]
Thus the list fixes the lower $\ell(v)$ bits and leaves $d-\ell(v)$ unresolved bits. It is
convenient to isolate the unresolved part by defining, for $t\ge \ell(v)$,
\[
b_{v,t}(x_v):=\left\lfloor\frac{x_v-a_v}{2^t}\right\rfloor \bmod 2 \in \F_2.
\]
This is the parity of the $t$th unresolved bit of $x_v$ relative to the list representative $a_v$.

The three binary constraint types become affine conditions on these bits. For a constraint $e$
between $u$ and $v$, write its defining relation as
\[
\alpha_e x_u + \beta_e x_v = 0 \pmod{2^d},
\]
where $(\alpha_e,\beta_e)$ is one of $(1,-1)$, $(1,1)$, or $(1,-2)$. After substituting the list
normal forms, we obtain
\[
\alpha_e 2^{\ell(u)}y_u + \beta_e 2^{\ell(v)}y_v
\equiv
c_e
\pmod{2^d},
\qquad
c_e := -\alpha_e a_u - \beta_e a_v.
\]
At each depth $t$, taking the parity quotient modulo $2^{t+1}$ yields a Boolean defect bit
\[
\beta_t(e):=\left\lfloor \frac{c_e}{2^t} \right\rfloor \bmod 2
\in\F_2,
\]
whenever the depth is active for the edge. Equality and negation preserve depth, while doubling
shifts one unit of depth from $v$ to $u$ and therefore activates one fewer unresolved bit on the
right-hand side.

The exact active-depth set $A_e\subseteq\{0,1,\dots,d-1\}$ depends only on the list levels and the
constraint type. We define the lifted label of $e$ to be the vector
\[
\lambda(e):=\bigl(\beta_t(e)\bigr)_{t\in A_e}\in\F_2^{A_e},
\]
and pad with zeros so that all labels live in a common ambient space $\F_2^R$, where one may
always take $R\le d$.

\begin{remark}
After list normalization, the binary relation itself is parity-preserving at each active layer, while
all obstruction information sits in the offset vector $(\beta_t(e))_t$. This is why the lifted
object is linear over $\F_2$ rather than an arbitrary affine gadget.
\end{remark}

\subsection{Bit-layer operators}

It is sometimes useful to collect the unresolved bits of a variable into a column vector
\[
\mathbf{b}_v=(b_{v,\ell(v)},b_{v,\ell(v)+1},\dots,b_{v,d-1})^\top \in \F_2^{d-\ell(v)}.
\]
Then each edge type induces a sparse linear operator between these bit vectors. Equality uses the
identity, negation uses the identity plus a depth-dependent offset determined by $a_u+a_v$, and
doubling acts through the shift matrix
\[
S_{q}=
\begin{bmatrix}
0 & 0 & 0 & \cdots & 0 \\
1 & 0 & 0 & \cdots & 0 \\
0 & 1 & 0 & \cdots & 0 \\
\vdots & \vdots & \ddots & \ddots & \vdots \\
0 & 0 & \cdots & 1 & 0
\end{bmatrix}
\in \F_2^{q\times q},
\]
which satisfies
\[
S_q(z_0,z_1,\dots,z_{q-1})^\top=(0,z_0,z_1,\dots,z_{q-2})^\top.
\]
Consequently every normalized constraint can be written in the form
\[
A_e \mathbf{b}_u + B_e \mathbf{b}_v = \mathbf{\beta}(e),
\]
with $A_e,B_e$ chosen from a tiny fixed family of banded matrices. The lift keeps only the offset
vector $\mathbf{\beta}(e)$ because, after the active coordinates are aligned, the operators
$A_e,B_e$ contribute no independent cycle obstruction beyond the parity defect itself.

\subsection{Anchors and the anchor vertex}

A unary anchor $x_v=b$ is encoded by introducing a distinguished anchor vertex $\star$ and an
edge $v\star$ whose label records the parity defects between the list-normalized form of $x_v$ and
the target value $b$. Thus every constraint becomes an edge of a single graph $G_I$.

\subsection{Cycle-exactness of the lift}

\begin{proposition}[Lifted consistency criterion]\label{prop:lift}
Let $I$ be an instance and let $\Gamma(I)=(G_I,\lambda_I)$ be its lifted gain graph. A connected
subinstance $J\subseteq I$ is satisfiable if and only if the following hold:
\begin{enumerate}[label=(\roman*),leftmargin=2.2em,itemsep=2pt,topsep=2pt]
  \item every cycle $C$ of $G_J$ satisfies $\ell_{\lambda_I}(C)=0$;
  \item all anchors surviving in $J$ are mutually compatible.
\end{enumerate}
\end{proposition}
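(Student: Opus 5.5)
The plan is to prove both directions through the potential characterization, Proposition~\ref{prop:potential}. The dictionary is this: a satisfying assignment of $J$ should correspond to a potential on the lifted graph $G_J$, whose value at $v$ is the bit vector $\mathbf{b}_v$ of unresolved bits. The anchor vertex $\star$ carries the potential induced by the anchor value. First I would fix the normal forms $x_v=a_v+2^{\ell(v)}y_v$ and rewrite each binary constraint of $J$ in the layered form $A_e\mathbf{b}_u+B_e\mathbf{b}_v=\beta(e)$ from the bit-layer discussion. The aim is a reformulation in which, at each active depth $t$, satisfaction of $e$ is equivalent to $b_{u,t'}+b_{v,t''}=\beta_t(e)$ for the aligned indices. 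Carries are handled by revealing bits one layer at a time: once the lower layers are fixed, the carry into layer $t$ is determined, and it is absorbed into the offset $c_e$.

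For necessity, assume $J$ is satisfiable and take a solution $x$. For each $v$ define $p(v)$ as the vector of its aligned unresolved bits, and set $p(\star)=0$. The layered rewriting then gives $\lambda_I(uv)=p(u)+p(v)$ on every edge of $G_J$. The converse direction of Proposition~\ref{prop:potential} then shows that every cycle label vanishes, which is (i). Condition (ii) holds because all anchors are satisfied by the same assignment $x$.

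For sufficiency, assume (i) and (ii). Since $J$ is connected, Proposition~\ref{prop:potential} yields a potential $p$ on $V(G_J)$ that is unique up to a global shift. If $\star\in V(G_J)$, I normalize $p(\star)$ to the value dictated by the anchors; condition (ii) is exactly what makes this normalization consistent across all anchor edges. If there are no anchors, any shift works. Next I reconstruct $y_v$ from the bits $p(v)$, reading them as the binary expansion of $y_v$ relative to $a_v$. Running the layer equivalence backwards then shows that every constraint holds modulo $2^d$.

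The main obstacle is the layer equivalence itself, specifically for $x_u=-x_v$ and $x_u=2x_v$. Negation introduces carries that are not determined by the offset alone, since $-y=\bar y+1$. Doubling misaligns depths whenever $\ell(u)\ne \ell(v)+1$. I would first try an induction on the depth $t$, showing that the carry into depth $t$ is a fixed function of the data and the already-fixed lower layers. The tree-path construction of $p$ from a root makes this carry identical along every path, so it can be folded into $\beta_t(e)$ as the paper's definition $\beta_t(e)=\lfloor c_e/2^t\rfloor \bmod 2$ implicitly asserts. If the carry genuinely depends on the solution, the statement would need the active-depth sets $A_e$ to be chosen so that it cancels around cycles. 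I would verify this case by case for the three constraint types before assembling the argument.
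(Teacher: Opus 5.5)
Your argument depends entirely on the ``layer equivalence'': the claim that each constraint $e=uv$ of $J$ is equivalent to $p(u)+p(v)=\lambda_I(e)$ for the unresolved-bit vectors, with $\lambda_I(e)$ read off from the bits of $c_e$ alone. You flag it as the main obstacle and defer it. It is false, and both of your directions break exactly there.

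Take $d=2$, full lists with $a_u=a_v=0$, and the constraint $x_u=-x_v$. Then $c_e=0$, so $\lambda_I(e)=0$. The solution $(0,0)$ gives $p(u)+p(v)=0$. The solution $(1,3)$ gives bit vectors $(b_0,b_1)=(1,0)$ and $(1,1)$, so $p(u)+p(v)=(0,1)\neq\lambda_I(e)$. Since $-y=\bar y+1$, the carry into depth $t$ depends on the low bits of the particular solution, so no fixed offset $\beta_t(e)$ can absorb it.

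For $x_u=2x_v$ the edge acts through the shift $S_q$ rather than the identity. A cycle whose composed map is $x\mapsto 2x$ therefore forces $x=0$, which is a restriction on values that no condition on a sum of labels can express. Hence your necessity step (``the layered rewriting gives $\lambda_I(uv)=p(u)+p(v)$'') fails for the potential you build from a solution. Your sufficiency step (``running the layer equivalence backwards'') has nothing to run.

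Your fallback, choosing the sets $A_e$ so that the dependence cancels around cycles, cannot work either, because the statement itself fails for the labels the paper defines.

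\emph{Sufficiency fails.} Let $d=2$, give $x_1,x_2,x_3$ full lists with $a_v=0$, impose $x_1=-x_2$, $x_2=-x_3$, $x_3=-x_1$, and add the single anchor $x_1=1$. Every binary label is $0$ because $c_e=0$, whatever $A_e$ or the orientation. The anchor edge $1\star$ is pendant, so the only cycle of $G_J$ has label $0$ and (i) holds; (ii) holds vacuously. Your reconstruction makes $p$ constant on the triangle, giving $x_1=x_2=x_3=1$, which violates $x_1=-x_2$. Indeed the triangle forces $2x_1=0$, so $J$ is unsatisfiable. Replacing the triangle by $x_1=2x_2$, $x_2=x_3$, $x_3=x_1$ fails the same way via $x_1=2x_1$.

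\emph{Necessity fails too.} Take $L_1=1+2\Zmod{4}$, $L_2=L_3=\Zmod{4}$ and constraints $x_1=x_2$, $x_2=x_3$, $x_3=x_1$. The assignment $x_1=x_2=x_3=1$ is a solution. Yet the offsets $c_e=-1,0,1$ have depth-$1$ bits $1,0,0$, so the cycle label is nonzero at depth $1$, which is unresolved for all three variables. Writing the last constraint as $x_1=x_3$ instead would make that label vanish, so the label even depends on how the constraints are written.

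The paper's own proof takes the same propagation route and simply asserts the step you isolated; its appendix says the label ``records exactly the parity defect'' of the composition around a cycle. So there is no missing idea to borrow from it. Closing the gap requires changing the lift or the statement, not a case-by-case check.
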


\begin{proof}
Assume first that $J$ is satisfiable. Traversing a path in the constraint graph propagates the
unresolved bits of a root variable to every other vertex. Along any closed walk, the propagated
bits must return unchanged, so the sum of the parity defects around every cycle is zero.
Compatibility of anchors is immediate.

Conversely, assume that all lifted cycle labels vanish and the surviving anchors are compatible.
Fix a root variable $r$ in the connected component and choose any value of $x_r$ consistent with
its list and any anchor on $r$ if present. Propagate values along a spanning tree using the edge
relations. If two different root-to-$v$ paths are used, their symmetric difference is a cycle, and
vanishing of the cycle label implies that the induced propagated values agree at every active bit.
Hence the propagation is well defined. The anchor-compatibility assumption then guarantees that
all unary constraints are satisfied.
\end{proof}

\begin{corollary}[Gain-graph formulation of deletion]\label{cor:delete-balanced}
A deletion set $D$ is feasible for $I$ if and only if the lifted graph $\Gamma(I)-D$ is balanced and
all surviving anchors are mutually compatible on each connected component.
\end{corollary}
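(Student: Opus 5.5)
The plan is to derive Corollary~\ref{cor:delete-balanced} from Proposition~\ref{prop:lift} by working one connected component at a time.

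\emph{Reduction to components.} A system of constraints is satisfiable if and only if each connected component of its constraint graph is satisfiable. This holds because components share no variables, so satisfying assignments on different components can be combined freely. The one subtlety is the anchor vertex $\star$: in $G_I$ it joins all anchored variables into one graph. I will therefore use the components of the constraint graph with $\star$ removed, and treat the anchor edges of each such component as its surviving unary constraints. This matches the phrase ``compatible on each connected component'' in the statement.

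\emph{Balancedness is also componentwise.} Every cycle of $\Gamma(I)-D$ avoiding $\star$ lies inside a single variable component. By Lemma~\ref{lem:cycle-space-balanced}, balancedness of $\Gamma(I)-D$ restricted to the variable edges is therefore equivalent to balancedness of each component $G_J$. Applying Proposition~\ref{prop:lift} to each component subinstance $J$ of $I-D$, the subsystem $J$ is satisfiable exactly when all cycles of $G_J$ have zero lifted label and its surviving anchors are compatible. Taking the conjunction over all components proves both directions.

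\emph{Main obstacle: cycles through $\star$.} A cycle passing through $\star$ links two anchors, possibly in different variable components. Its label encodes exactly the consistency of those anchors with the propagated bits. There are two cases.
\begin{itemize}
\item If the two anchors lie in the same variable component, a zero label on that cycle is equivalent to compatibility of the two anchors. This follows from the propagation argument in the proof of Proposition~\ref{prop:lift}.
\item If the anchors lie in different components, the cycle is not required to be balanced. In that case the statement should be read as requiring balancedness only on the graph with $\star$ treated per component, for example by splitting $\star$ into one copy per component.
\end{itemize}
I would make this convention explicit first, since it is the only step where the literal balancedness condition on the whole graph and the componentwise anchor condition could differ. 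The remaining verification then reduces directly to Proposition~\ref{prop:lift}.
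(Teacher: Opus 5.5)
Your plan (split $I-D$ into the connected components of its constraint graph and apply Proposition~\ref{prop:lift} to each) is essentially the paper's argument: the paper gives no separate proof and treats the corollary as an immediate consequence of that proposition. One correction: your second case about $\star$ cannot occur, because a cycle through $\star$ consists of two anchor edges $u\star$, $v\star$ plus a $u$--$v$ path avoiding $\star$, so both anchors always lie in the same variable component, and therefore splitting $\star$ per component leaves the set of cycles unchanged and the statement needs no reinterpretation.
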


\begin{definition}
The \emph{ambiguity rank} of an instance $I$ is
\[
\rho(I):=\rho\bigl(\Gamma(I)\bigr),
\]
the cycle-label rank of its lifted ambiguity graph.
\end{definition}

\subsection{Potential view of satisfiability}

Combining Proposition~\ref{prop:potential} with the anchor encoding yields an alternative and
useful restatement: for every connected component of $\Gamma(I)-D$, satisfiability is equivalent
to the existence of a vertex potential
\[
p:V\to\F_2^{\rho(I)}
\]
that realizes all edge labels as coboundaries and simultaneously satisfies all anchor edges. This
is the viewpoint used in the polynomial-time verification routine below.

\section{Algorithm for Coset-List Min-2-Lin$^{\pm}$ over $\Zmod{2^d}$}

We now combine the lifting, compression, and covering ingredients.

\subsection{Checking a candidate deletion set}

\begin{lemma}[Polynomial-time verification]\label{lem:zerocheck}
Given an instance $I$ and a candidate deletion set $D$, one can test in polynomial time whether
$I-D$ is satisfiable.
\end{lemma}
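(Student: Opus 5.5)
The plan is to go through the gain-graph reformulation of Corollary~\ref{cor:delete-balanced}, since it reduces satisfiability of $I-D$ to two checks: balancedness of $\Gamma(I)-D$ and compatibility of the surviving anchors on each connected component. Concretely, I would first build the lifted gain graph $\Gamma(I)=(G_I,\lambda_I)$. This means normalizing each variable by its list, computing the offsets $c_e$ and the defect bits $\beta_t(e)$ for all active depths, and adding the anchor vertex $\star$ with its anchor edges. This costs $O(md)$ bit operations, because every label lives in $\F_2^R$ with $R\le d$. Deleting $D$ then just removes the corresponding edges.

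On each connected component $K$ of $G_I-D$, I will test balancedness with the potential characterization (Proposition~\ref{prop:potential}) and the fundamental-cycle formula. Choose a spanning tree of $K$ and compute the prefix sums $\sigma_T(x)$ by BFS. For every non-tree edge $f=uv$, check that $\lambda(f)+\sigma_T(u)+\sigma_T(v)=0$. The fundamental cycles span $\mathcal{C}(K)$, so by Lemma~\ref{lem:cycle-space-balanced} the component is balanced exactly when all of these labels vanish. The total work is $O(mR)$.

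The anchor condition is the delicate step, because ``mutually compatible'' must be made concrete. I would handle it in the most robust way, bypassing the label abstraction. For each component, fix a root variable and propagate actual values in $\Zmod{2^d}$ along the spanning tree. The relations $x_u=\pm x_v$ determine one endpoint from the other. For $x_u=2x_v$, knowing $x_v$ determines $x_u$, but knowing $x_u$ determines $x_v$ only up to the top bit. Because of this possible branching, I would instead treat the whole check as solving a linear system over $\Zmod{2^d}$ in the unknowns $y_v$, after substituting the list normal forms $x_v=a_v+2^{\ell(v)}y_v$, together with the anchor equations $x_v=b$. Such a system can be decided in polynomial time via Smith/Hermite normal form over the principal ideal ring $\Zmod{2^d}$, or equivalently over $\Z$ with the modulus added as an extra variable. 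This gives the lemma outright: $I-D$ is satisfiable if and only if that linear system is solvable. Since every operation is polynomial in $n$, $m$ and $d$, I would take this direct route as the main proof and present the gain-graph check only as a faster certificate that agrees with it by Proposition~\ref{prop:lift}.

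I expect the main obstacle to be the justification that the direct system faithfully encodes the coset lists and anchors. This needs care with the list parameterization when $\ell(v)=d$, where $y_v$ ranges over the trivial group, and with anchors that lie outside the list, which make the instance immediately infeasible. Once those edge cases are handled, solvability of linear systems over $\Zmod{2^d}$ in polynomial time is standard, and the lemma follows.
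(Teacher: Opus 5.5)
Your main argument is correct, and it takes a different route from the paper's. The paper stays inside the lifted graph. For each component of $\Gamma(I)-D$ it propagates an $\F_2^R$-potential along a spanning tree, as in Proposition~\ref{prop:potential}. It then checks that every non-tree edge (equivalently, every fundamental cycle) has zero label, asserts that the same propagation certifies the anchor edges, and reads off satisfiability via Corollary~\ref{cor:delete-balanced}.

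That route gives an $O(mR)$ check in the language of the covering and compression machinery. But it is only as sound as the exactness of the lift. Moreover, the proof of Proposition~\ref{prop:lift} that it leans on propagates actual values along tree edges, glossing over the point you flagged: a doubling edge $x_u=2x_v$ cannot be traversed from $u$ to $v$.

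Your route substitutes $x_v=a_v+2^{\ell(v)}y_v$ and decides one linear congruence system modulo $2^d$ by normal forms. It is slower (polynomial, roughly cubic, rather than near-linear), but it is self-contained, treats doubling edges and anchors uniformly, and needs nothing about the lift. Your edge cases disappear if you let every $y_v$ range over all of $\Zmod{2^d}$: the map $y_v\mapsto a_v+2^{\ell(v)}y_v$ is onto $L_v$ for every $\ell(v)\le d$, including $\ell(v)=d$. An anchor outside $L_v$ then just yields the unsolvable equation $2^{\ell(v)}y_v\equiv b-a_v$. Elimination over $\Zmod{2^d}$ works by pivoting on an entry of minimal $2$-adic valuation, which divides everything in its row and column.

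You should drop the aside that the gain-graph check ``agrees with'' your system by Proposition~\ref{prop:lift}, because it does not in general. Over $\Zmod{4}$, take $x_1=-x_2$, $x_2=-x_3$, $x_1=-x_3$ with $L_1=1+2\Zmod{4}$ and $L_2=L_3=\Zmod{4}$.
\begin{itemize}
\item The constraints force $x_1=x_3=-x_1$, so $2x_1=0$, contradicting that $x_1$ is odd. Your linear system correctly rejects.
\item The offsets are $c_e=3,0,3$. The two edges at $x_1$ have the same type, the same list levels $(\ell(u),\ell(v))=(1,0)$, and the same offset, so they get identical labels. The triangle's cycle label is therefore $0$ whatever active-depth sets are used.
\item There are no anchors, so condition (ii) is vacuous and the lifted test accepts.
\end{itemize}
So your direct linear-algebra argument is not just an alternative presentation. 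It is the one that actually establishes the lemma.
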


\begin{proof}
Build the lifted graph $\Gamma(I)-D$. For each connected component, compute a spanning tree and
recover a tentative potential by propagation from an arbitrary root as in the proof of
Proposition~\ref{prop:potential}. Every non-tree edge can then be checked against the induced
potential in constant time per lifted coordinate; equivalently, one may verify that every
fundamental cycle has zero label. If the component is balanced, the same propagation certifies all
surviving anchor edges. Thus satisfiability is testable in polynomial time.
\end{proof}

\subsection{Candidate-set generation}

Let $\widehat\Gamma=(G,\widehat\lambda)$ be the compressed lifted graph obtained from
Theorem~\ref{thm:rankcompression}, so $\widehat\lambda:E\to\F_2^\rho$. Apply
Theorem~\ref{thm:coordcover} with parameter $k$, obtaining $(S,F)$.

Suppose $D^\star$ is an optimal deletion set of size at most $k$ and let $H$ be the surviving
balanced subgraph of the lifted graph. If the covering succeeds on $H$, then all edges deleted by
$D^\star$ that touch $V(H)$ lie in $F$, and at most $\rho k$ such edges exist.

\begin{lemma}[Enumeration lemma]\label{lem:enumeration}
Conditioned on the success event of Theorem~\ref{thm:coordcover} for an optimal balanced
subgraph, there exists an optimal deletion set $D\subseteq F$ with $\abs{D}\le k$.
\end{lemma}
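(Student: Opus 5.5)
The plan is to apply the success event to the surviving subgraph of an optimal solution and then trade that solution for a deletion set read off from $F$. Fix an optimal deletion set $D^\star$ with $\abs{D^\star}\le k$ and let $H:=\Gamma(I)-D^\star$ be the surviving lifted graph, which is balanced by Corollary~\ref{cor:delete-balanced}.

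\emph{Step 1: placing $H$ inside $S$.} Since $H$ keeps every vertex, $\delta_G(V(H))=\emptyset$. Hence $\cost(H)=\abs{E(G)\setminus E(H)}=\abs{D^\star}\le k$, so Theorem~\ref{thm:coordcover}(iii) applies to $H$. On the success event, $V(H)\subseteq S$, which forces $S=V(G)$. The deletion set $D^\star$ then has all its endpoints in $V(H)$, and at most $\rho k$ edges of $F$ touch $V(H)$, so $\abs{F}\le\rho k$. This makes $F$ small enough to enumerate subsets of it, but it does not yet show $D^\star\subseteq F$.

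\emph{Step 2: extracting a deletion set from $F$.} I would set $D:=D^\star\cap F$. Since $\abs{D}\le\abs{D^\star}\le k$, it remains to show that $I-D$ is feasible, and then optimality of $D^\star$ forces $D^\star\setminus F=\emptyset$, so $D=D^\star$. For feasibility I would work with the graph $G-D$. Its edge set splits as
\[
E(G-D)=\bigl(E(G)\setminus F\bigr)\cup\bigl(F\setminus D^\star\bigr),
\]
and both parts are individually harmless. The first part lies in $(G-F)[S]$, which is balanced by Theorem~\ref{thm:coordcover}(ii). The second part lies in $H$, which is balanced. The difficulty is a cycle that mixes edges from the two parts.

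\emph{Step 3: mixed cycles (the main obstacle).} Here I would go back into the proof of Theorem~\ref{thm:coordcover} and use the stronger form of the one-coordinate success event in \cite{dabrowski2026optimal}. There, on success, the edges of $F_i$ incident with $V(H)$ are exactly deleted edges of $H$'s solution, so $F_i\cap E(H)=\emptyset$. Taking the union over coordinates gives $F\cap E(H)=\emptyset$. Then $F\setminus D^\star=\emptyset$, so $G-D=G-F$ with $S=V(G)$, and $G-F$ is balanced by Theorem~\ref{thm:coordcover}(ii). The same argument shows $F\subseteq D^\star$, hence $\abs{F}\le k$ and $F$ itself is a feasible deletion set no larger than $D^\star$. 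If the stronger form is not available, I would fall back to an exchange argument: take a cycle of $G-D$ with nonzero label that uses as few edges outside $F$ as possible, and derive a contradiction with balancedness of $H$ and of $(G-F)[S]$ through the potential characterization (Proposition~\ref{prop:potential}).

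\emph{Step 4: anchors and weights.} Finally I would check the anchor condition of Corollary~\ref{cor:delete-balanced}. Every anchor edge surviving in $G-D$ also survives in $H$, and the connected components of $G-D$ are unions of components of $H$ joined only through balanced edges. So compatibility follows from the potential view: one common potential satisfies every anchor edge. For the weighted variant, the same argument shows that the minimum-weight feasible set of size at most $k$ is contained in $F$, which justifies enumerating subsets of $F$.
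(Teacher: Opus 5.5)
Your main line is correct, but it takes a genuinely different route from the paper's. The paper reads the success event as saying that every deleted edge incident with $V(H)$ lies in $F$. Together with $\delta_G(S)\subseteq F$, this gives $D^\star\subseteq F$ at once. The enumerated set then inherits feasibility from $D^\star$, property (ii) is never used, and anchors need no re-check.

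You use the opposite containment, $F_i\cap E(H)=\emptyset$. This is the half of the one-coordinate guarantee that by itself implies the count bound in (iii). Since $H=G-D^\star$ is spanning, you get $S=V(G)$ and $F\subseteq D^\star$. Property (ii) then makes $F$ a balancing set of size at most $\abs{D^\star}$, and optimality (or strict positivity of weights in the weighted variant) forces $F=D^\star$. So you recover the paper's containment as a consequence rather than assuming it.

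Your route runs on the weaker and more natural form of the covering guarantee and actually exploits (ii). The price is that you must certify feasibility of a new set, and you need the hidden solution to be optimal. You are right that (iii) as literally stated gives neither containment; the paper's one-line proof silently strengthens the success event too.

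Two repairs are needed.

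First, your fallback cannot work: from (i)--(iii) alone, $D^\star\cap F$ may be infeasible. Take an unbalanced triangle with edges $a,b,c$, $\lambda(a)=1$ and zero elsewhere, $k=1$, and $D^\star=\set{a}$. The output $(V(G),\set{b})$ satisfies (i), (ii) and the success condition of (iii) for $H=G-a$, yet $D^\star\cap F=\emptyset$. So the strengthened success event is indispensable, not optional.

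Second, in Step 4 the inclusion is backwards. Since $D\subseteq D^\star$, every edge of $H$ survives in $G-D$, not conversely, so an anchor edge of $D^\star\setminus F$ may survive in $G-F$. Derive anchor compatibility instead from balancedness of $G-F$ itself, which contains the surviving anchor edges $v\star$. By Proposition~\ref{prop:potential}, there is one potential per component realizing every surviving label, anchor edges included, and this is the potential-view form of anchor compatibility.
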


\begin{proof}
Let $H$ be the balanced graph that remains after deleting $D^\star$. Since the covering succeeds,
we have $V(H)\subseteq S$ and every deleted edge incident with $V(H)$ belongs to $F$.
Furthermore, $\delta_G(S)\subseteq F$, so every edge removed to isolate $H$ is also in $F$.
Hence all edges of $D^\star$ belong to $F$, proving the claim.
\end{proof}

\subsection{Search-space bound}

Since the success guarantee bounds the number of relevant candidate edges by $\rho k$, the number
of deletion sets that must be inspected is
\[
\sum_{i=0}^k \binom{\rho k}{i}
\le
\left(\frac{e\rho k}{k}\right)^k
=
(e\rho)^k
\le
2^{O(k\log(k\rho+2))}.
\]
The first inequality is the standard bound on the lower tail of the binomial sum.

\subsection{Proof of the main algorithmic theorem}

\begin{proof}[Proof of Theorem~\ref{thm:main}]
Let $I$ be an instance with deletion budget $k$.

\paragraph{Step 1: lift.}
Construct the lifted graph $\Gamma(I)$ by list normalization, parity-defect extraction, and anchor
encoding. This takes $\widetilde{O}(md)$ time.

\paragraph{Step 2: compress.}
Apply Theorem~\ref{thm:rankcompression} to obtain an equivalent labeling into $\F_2^\rho$.
The running time is $\widetilde{O}(md+\rho^\omega)$.

\paragraph{Step 3: cover.}
Apply Theorem~\ref{thm:coordcover} to the compressed graph with parameter $k$. This takes
$2^{O(k^2\rho)}\cdot n^{O(1)}$ time and, with probability at least $2^{-O(k^2\rho)}$, succeeds on
some optimal balanced subgraph.

\paragraph{Step 4: enumerate.}
Enumerate all subsets $D\subseteq F$ with $\abs{D}\le k$. By the bound above, this requires
$2^{O(k\log(k\rho+2))}$ iterations. For each candidate, test satisfiability of $I-D$ using
Lemma~\ref{lem:zerocheck}. Return the feasible subset of minimum cardinality, or of minimum
weight among those of cardinality at most $k$ in the weighted variant.

\paragraph{Correctness.}
If the algorithm returns a deletion set, it is feasible by Lemma~\ref{lem:zerocheck}. Conversely,
if $\OPT(I)\le k$, then on the success event of Theorem~\ref{thm:coordcover},
Lemma~\ref{lem:enumeration} guarantees that an optimal solution appears among the enumerated
subsets. The algorithm therefore has one-sided error.

\paragraph{Running time.}
Combining the three stages gives
\[
2^{O(k^2\rho+k\log(k\rho+2))}\cdot n^{O(1)}+\widetilde{O}(md+\rho^\omega),
\]
as claimed.
\end{proof}

\section{Additional structural consequences}

\subsection{A finite compression viewpoint}

The proof yields a direct compression statement.

\begin{corollary}[Compression around the obstruction space]
Every instance $I$ of \textsc{Coset-List Min-2-Lin}$^{\pm}$ over $\Zmod{2^d}$ with deletion budget
$k$ can, in randomized polynomial time, be reduced to a family of at most $2^{O(k^2\rho)}$
Boolean edge-deletion subinstances on candidate sets of size $O(k\rho)$, such that $I$ is feasible
if and only if one of these compressed instances is feasible.
\end{corollary}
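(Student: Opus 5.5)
The plan is to read the corollary off the proof of Theorem~\ref{thm:main} by stopping just before the enumeration step (Step 4), so that each random run of the covering procedure produces one compressed subinstance. First, build the lifted graph $\Gamma(I)$ and compress it with Theorem~\ref{thm:rankcompression} to $\widehat\lambda:E\to\F_2^\rho$. Both steps take polynomial time. By Corollary~\ref{cor:delete-balanced} and Lemma~\ref{lem:projection}, feasibility of a deletion set $D$ is then equivalent to two conditions: $\widehat\Gamma-D$ is balanced, and the surviving anchors are compatible. Second, run the covering procedure of Theorem~\ref{thm:coordcover} independently $N=2^{O(k^2\rho)}$ times, with the hidden constant chosen larger than the one in the success probability. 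Each run $j$ yields a pair $(S_j,F_j)$. From it we form the subinstance $I_j$, whose deletable edges are $F_j$ and which keeps the Boolean ($\F_2^\rho$-labeled) balancedness-plus-anchor feasibility test. Each run costs $2^{O(k^2\rho)}n^{O(1)}$, so the whole family is produced in $2^{O(k^2\rho)}\cdot n^{O(1)}$ time. This is polynomial for fixed $k,\rho$, and that is how ``randomized polynomial time'' should be read here.

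Third, we need candidate sets of size $O(k\rho)$. Theorem~\ref{thm:coordcover}(iii) bounds only the edges of $F_j$ incident with $V(H)$ by $\rho k$, not all of $F_j$. I would therefore restrict $F_j$ to the edges of $F_j$ incident with $S_j$. If this set has more than $\rho k$ edges, I would discard the run and replace $I_j$ by a trivially infeasible instance. This is sound: on the success event every deleted edge lies in $F_j$ (Lemma~\ref{lem:enumeration}) and touches $V(H)\subseteq S_j$, and the restricted set then has size at most $\rho k$. The step needing care is edges of $F_j$ with neither endpoint in $S_j$, which may appear in an optimal $D^\star$ elsewhere in the graph. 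To handle them I would apply the argument per component of the surviving graph, or equivalently take $H$ to be the whole surviving graph $G-D^\star$, which is balanced and has $\cost(H)\le k$. Then $V(H)$ is everything, and the at most $\rho k$ edges of $F_j$ incident with $V(H)$ are all the relevant edges. With this choice every $I_j$ has a candidate set of size at most $\rho k=O(k\rho)$.

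Finally, correctness comes in two directions. \emph{Soundness:} any feasible deletion set found inside some $I_j$ is a feasible deletion set for $I$, because the feasibility test in $I_j$ is the exact criterion of Corollary~\ref{cor:delete-balanced} transported through compression. \emph{Completeness:} if $\OPT(I)\le k$, a single run succeeds with probability $2^{-O(k^2\rho)}$, so with $N$ runs some run succeeds with constant (boostable) probability, and in that run Lemma~\ref{lem:enumeration} places an optimal $D\subseteq F_j$ of size at most $k$. The ``if and only if'' therefore holds with one-sided error, matching the randomized setting. I expect the main obstacle to be the third step: justifying that the whole candidate set, not just the part incident with $V(H)$, has size $O(k\rho)$. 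The choice $H=G-D^\star$ is what I would rely on there.
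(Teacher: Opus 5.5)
Your proposal is correct and is essentially the paper's own argument. The paper reads the corollary off from the proof of Theorem~\ref{thm:main} by stopping before the enumeration step, and your $H=G-D^\star$ is exactly its ``surviving balanced subgraph,'' so on the success event all of $F$ has at most $\rho k$ edges. Your explicit $2^{O(k^2\rho)}$ repetitions, discarding of oversized $F_j$, and reading of ``polynomial time'' and ``if and only if'' as FPT time with one-sided error make precise what the paper leaves implicit. As in the paper, completeness rests on Lemma~\ref{lem:enumeration}, i.e., on $F$ actually containing the deleted edges, which is more than Theorem~\ref{thm:coordcover}(iii) literally states.
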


This formulation is useful if one wants to hand the reduced search space to a SAT solver, a MinCSP
backend, or a specialized branch-and-bound routine.

\subsection{Why $\rho$ is the right parameter}

The ambient lifted dimension is at most $d$, but $d$ can be a poor predictor of the true
algorithmic difficulty. If several lifted coordinates are linearly dependent on the cycle space,
then they impose no independent obstructions and should not appear in the exponential part of the
running time. The parameter $\rho$ removes exactly this redundancy. Put differently,
\[
\rho = \rankop\bigl(L_{\Gamma(I)}\bigr)
\]
measures the dimension of the obstruction space, not the number of raw coordinates produced by
bit lifting.

\subsection{Potential-based local reconstruction}

When $(G-F)[S]$ is balanced, Proposition~\ref{prop:potential} yields a vertex potential on each
connected component. This means that after the expensive global step of identifying a good
candidate edge set, the remaining reconstruction problem is purely local: every surviving edge
amounts to the consistency check
\[
\lambda(uv)=p(u)+p(v),
\]
and every anchor edge reduces to a linear condition on the single vertex potential at its endpoint.
This separation between global covering and local linear verification is one of the reasons the
framework scales cleanly.

\subsection{A matrix viewpoint on balanced subinstance selection}

Let $C_G\in\F_2^{E\times \mu}$ be any cycle-basis matrix of $G$. Then a deletion set $D$ is
feasible precisely when the surviving subgraph $G-D$ satisfies
\[
\Lambda_{G-D} C_{G-D}=0
\]
after the anchor constraints are incorporated. This shows that the algorithm is solving a highly
structured rank-annihilation problem: delete few rows of the constraint incidence system until the
label matrix vanishes on the surviving cycle basis. In this form the role of compression is
especially transparent, because only the rank of $\Lambda C$ matters.

\section{Concluding remarks}

We end with three directions suggested by the present work.

\paragraph{1. Beyond signed unit constraints.}
The next natural generalization is to allow arbitrary odd multipliers in place of $\pm 1$.
Over $2$-power moduli, this would move the lifted obstruction from a purely $\F_2$-linear object
closer to an affine $2$-adic one. It is not clear whether a rank-sensitive covering theorem still
exists in that setting.

\paragraph{2. Vector-native covering.}
Our covering theorem is obtained by tensorizing a scalar theorem. That is robust and modular, but
probably not optimal. A genuinely vector-native argument might avoid the linear dependence on the
number of coordinates in both the exponent and the candidate-set size.

\paragraph{3. Conservative prime-power deletion problems.}
The conservative/list landscape for modular equation deletion remains wide open. The present paper
suggests a concrete route for future progress: identify a gain-graph lift, prove a covering theorem
for small balanced regions, and compress everything to the intrinsic cycle-label rank.

\appendix

\section{Additional matrix identities}\label{app:matrix}

This appendix records the same framework in compact matrix language and makes explicit a few useful
algebraic identities.

Let $B_G\in\F_2^{V\times E}$ be an incidence matrix and let
$\Lambda_\Gamma\in\F_2^{r\times E}$ be the label matrix. Choose a cycle-basis matrix
$C_G\in\F_2^{E\times \mu}$ whose columns form a basis of $\Ker(B_G)$. Then
\[
L_\Gamma = \Lambda_\Gamma C_G
\]
is a matrix representation of the cycle-label map, and therefore
\[
\rho(\Gamma)=\rankop(\Lambda_\Gamma C_G).
\]
This yields the following exact reformulation of balancedness: a subgraph $H\subseteq G$ is
balanced if and only if
\[
\Lambda_{\Gamma|H} C_H = 0.
\]

There is also a quotient-space interpretation. If
$\pi:\F_2^E\to \F_2^E/\row(B_G)$ is the quotient map, then the rows of $\Lambda_\Gamma$ descend
to linear functionals on the quotient, and
\[
\rho(\Gamma)=\dim \spann\bigl(\pi(\lambda_1),\dots,\pi(\lambda_r)\bigr),
\]
where $\lambda_i$ denotes the $i$th row of $\Lambda_\Gamma$. In particular,
Corollary~\ref{cor:minimal-dimension} may be read as the statement that one needs at least as
many coordinates as the quotient-space dimension of the surviving row span modulo cuts.

Finally, Proposition~\ref{prop:rankidentity} can be derived directly from block Gaussian
elimination. Indeed, if one chooses a basis adapted to the decomposition
$\F_2^E=\row(B_G)\oplus W$, then after a change of basis the stacked matrix
\[
\begin{bmatrix} B_G \\ \Lambda_\Gamma \end{bmatrix}
\]
assumes the form
\[
\begin{bmatrix}
B'_G & 0 \\
* & R
\end{bmatrix},
\]
where $R$ is exactly the matrix of the cycle-label map on the complementary coordinates. Hence the
rank increase caused by appending $\Lambda_\Gamma$ is precisely $\rankop(R)=\rho(\Gamma)$.

\section{A short cocycle proof of Proposition~\ref{prop:lift}}

Fix a connected lifted subsystem $J$ and a spanning tree $T$ of its constraint graph. The tree
edges determine every variable as an affine function of a chosen root value:
\[
x_v = A_v x_r + b_v,
\]
where $A_v$ is built from the composition of the maps $x\mapsto x$, $x\mapsto -x$, and
$x\mapsto 2x$ along the root-to-$v$ tree path, while $b_v$ collects the offsets forced by the
lists and anchors. Consider a non-tree edge $e=uv$. It is satisfied by the propagated assignment if
and only if the composition around the fundamental cycle $C_e$ is the identity on every active bit.
The lifted label $\ell_{\lambda_I}(C_e)$ records exactly the parity defect of that composition.
Thus all non-tree edges are satisfied if and only if all fundamental cycle labels vanish, which is
equivalent to balancedness because fundamental cycles span the cycle space.

\section{Dimension of the balanced potential space}

The potential formulation also gives a precise count of the remaining degrees of freedom after
lifting.

\begin{proposition}\label{prop:potential-dim}
Let $H$ be a balanced subgraph of a linear gain graph $\Gamma=(G,\lambda)$, and suppose $H$ has
$c(H)$ connected components. Then the set of vertex potentials realizing the labels on $H$ is an
affine space of dimension $r\,c(H)$ over $\F_2$.
\end{proposition}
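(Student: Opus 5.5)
The plan is to reduce everything to Proposition~\ref{prop:potential} applied component by component, and then to describe the solution set as a linear system.

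First we check that the solution set is nonempty. Let $H_1,\dots,H_{c(H)}$ be the connected components of $H$. Each $H_j$ is a subgraph of the balanced graph $H$, so every cycle of $H_j$ is balanced, and $H_j$ is itself balanced and connected. Proposition~\ref{prop:potential} then gives, for each $j$, a potential $p_j:V(H_j)\to\F_2^r$ with $\lambda(uv)=p_j(u)+p_j(v)$ for all $uv\in E(H_j)$. Every edge of $H$ lies inside one component, so gluing the $p_j$ gives a potential $p_0:V(H)\to\F_2^r$ that realizes all labels on $H$.

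Next we identify the solution set. The condition $\lambda(uv)=p(u)+p(v)$ for all $uv\in E(H)$ is a system of linear equations over $\F_2$ in the unknown $p\in(\F_2^r)^{V(H)}$. Its solution set is therefore $p_0+K$, where
\[
K=\set{q:V(H)\to\F_2^r : q(u)=q(v)\text{ for all } uv\in E(H)}
\]
is the solution space of the homogeneous system. We compute $K$ directly. A map $q$ is constant along every edge exactly when it is constant on each connected component, which is a walk argument (equivalently, the uniqueness clause of Proposition~\ref{prop:potential}). Conversely, every assignment of one vector $c_j\in\F_2^r$ to each component $H_j$ lies in $K$. This gives a linear isomorphism $K\cong(\F_2^r)^{c(H)}$, so $\dim K=r\,c(H)$. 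Hence the set of realizing potentials is an affine space of dimension $r\,c(H)$.

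The only delicate point is bookkeeping. Isolated vertices of $H$ must count as components, and they contribute a free vector each, consistent with the count. Also, potentials live on $V(H)$, not on $V(G)$. Neither point causes real difficulty.
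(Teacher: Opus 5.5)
Your proof is correct and is essentially the paper's argument: apply Proposition~\ref{prop:potential} on each connected component, use its uniqueness-up-to-a-constant clause, and take the product over the $c(H)$ components. Writing the solution set explicitly as a particular solution $p_0$ plus the homogeneous kernel $K$, and checking nonemptiness and the isolated-vertex case, adds care that the paper's shorter proof leaves implicit.
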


\begin{proof}
On each connected component, Proposition~\ref{prop:potential} gives one potential, and any two
such potentials differ by an additive constant in $\F_2^r$. Distinct components contribute
independently, so the total solution space is the direct product of $c(H)$ copies of $\F_2^r$.
\end{proof}

In the presence of anchor edges, each anchor imposes an affine linear condition on the potential at
one endpoint. Thus feasibility of a candidate deletion set can be viewed as testing whether the
affine space from Proposition~\ref{prop:potential-dim} intersects the anchor constraint family.
This gives another linear-algebraic reading of the verification routine from Section~6.

\section{Pseudocode}

\begin{algorithm}[H]
\caption{Solve-CL-Min-2-Lin$^{\pm}$}
\label{alg:solve}
\begin{algorithmic}[1]
\Require Instance $I$, deletion budget $k$
\Ensure YES/NO and an optimal deletion set if YES
\State Build the lifted gain graph $\Gamma(I)$ over $\F_2^R$
\State Compress $\Gamma(I)$ to $\widehat\Gamma=(G,\widehat\lambda)$ with labels in $\F_2^\rho$
\State Run coordinatewise balanced covering on $\widehat\Gamma$ with parameter $k$
\State Obtain $(S,F)$
\ForAll{$D\subseteq F$ with $\abs{D}\le k$}
  \If{$I-D$ is satisfiable}
    \State keep the best feasible $D$ found so far
  \EndIf
\EndFor
\State \Return the best feasible set if one exists, otherwise NO
\end{algorithmic}
\end{algorithm}

{\footnotesize
\setlength{\itemsep}{0pt}
\setlength{\parskip}{0pt}

}

\end{document}